\newtheorem{proposition}{Proposition}
\newtheorem{theorem}{Theorem}
\newtheorem{lemma}{Lemma}
\theoremstyle{definition}
\newtheorem{remark}{Remark}
\newtheorem{example}{Example}
\newcommand{\real}{\mathbb R} %real
\newcommand{\complex}{\mathbb C} %complex
\newcommand{\nat}{\mathbb N} %natural
\newcommand{\half}{\tfrac{1}{2}} %half
\newcommand{\mo}[1]{\left| #1 \right|} %modulus
\newcommand{\hi}{\mathcal{H}} %Hilbert space H
\newcommand{\lh}{\mathcal{L(H)}} %bounded linear operators
\newcommand{\lsh}{\mathcal{L}_s(\hi)} %bounded linear operators
\newcommand{\sh}{\mathcal{S(H)}} %states on H
\newcommand{\eh}{\mathcal{E(H)}} %effects
\newcommand{\ip}[2]{\left\langle\,#1\,|\,#2\,\right\rangle} %inner product
\newcommand{\ket}[1]{|#1\rangle} %ket
\newcommand{\bra}[1]{\langle#1|} %bra
\newcommand{\tr}[1]{\mathrm{tr}\left[#1\right]} %trace
\newcommand{\ran}{\textrm{ran}\,} %range
\newcommand{\id}{\mathbbm{1}} %identity operator
\newcommand{\A}{\mathsf{A}}%generic observable
\newcommand{\B}{\mathsf{B}}%generic observable
\newcommand{\C}{\mathsf{C}}%generic observable
\newcommand{\D}{\mathsf{D}}%generic observable
\newcommand{\E}{\mathsf{E}}%generic observable
\newcommand{\F}{\mathsf{F}}%generic observable
\newcommand{\T}{\mathsf{T}}
\newcommand{\N}{\mathsf{N}}
\newcommand{\eid}{\mathfrak{1}}
\newcommand{\enul}{\mathfrak{0}}
\newcommand{\state}{\mathcal{S}} %states
\newcommand{\effect}{\mathcal{E}} %effects
\newcommand{\reffect}{\tilde{\mathcal{E}}} % restricted set of effects
\newcommand{\meter}{\mathcal{M}} %meters
\newcommand{\rmeter}{\tilde{\mathcal{M}}} % restricted set of meters
\newcommand{\noise}{\mathcal{N}} %noise
\newcommand{\trivial}{\mathcal{T}} %trivial
\newcommand{\simu}[1]{\mathfrak{sim}(#1)} %simulation
\newcommand{\lmin}{\lambda_{\min}} % min eigenvalue
\newcommand{\lmax}{\lambda_{\max}} % max eigenvalue
\begin{document}

\title[]{Operational restrictions in general probabilistic theories}

\author[]{Sergey N. Filippov}
\email{sergey.filippov@phystech.edu}
\address{Steklov Mathematical Institute of Russian Academy of Sciences, Moscow  119991, Russia}
\address{Valiev Institute of Physics and Technology of Russian Academy of Sciences, Moscow  117218, Russia}
\address{Moscow Institute of Physics and Technology, Dolgoprudny, Moscow Region 141700, Russia}

\author[]{Stan Gudder}
\email{sgudder@du.edu}
\address{Department of Mathematics, University of Denver, Denver, CO 80208, U.S.A.}

\author[]{Teiko Heinosaari}
\email{teiko.heinosaari@utu.fi}
\address{QTF Centre of Excellence, Department of Physics and Astronomy, University of Turku, Turku 20014, Finland}

\author[]{Leevi Lepp\"{a}j\"{a}rvi}
\email{leille@utu.fi}
\address{QTF Centre of Excellence, Department of Physics and Astronomy, University of Turku, Turku 20014, Finland}

%\pacs{03.65.Ta, 03.65.Aa}

% 03.65.Ta    Foundations of quantum mechanics; measurement theory
% 03.65.Aa    Quantum systems with finite Hilbert space

\begin{abstract}
The formalism of general probabilistic theories provides a universal
paradigm that is suitable for describing various physical systems
including classical and quantum ones as particular cases. Contrary to the usual no-restriction hypothesis, the set
of accessible meters within a given theory can be limited for
different reasons, and this raises a question of what restrictions
on meters are operationally relevant. 
We argue that all operational restrictions must be closed under simulation, where
the simulation scheme involves mixing and classical post-processing
of meters. We distinguish three classes of such operational
restrictions: restrictions on meters originating from restrictions
on effects; restrictions on meters that do not restrict the
set of effects in any way; and all other restrictions. We
fully characterize the first class of restrictions and discuss its connection to
convex effect subalgebras. We show that the restrictions belonging to the second class can impose severe physical limitations despite the fact that all
effects are accessible, which takes place, e.g., in the
unambiguous discrimination of pure quantum states via effectively
dichotomic meters. 
We further demonstrate that there are physically meaningful restrictions that fall into the third class.
The presented study of operational restrictions provides a better understanding on how accessible measurements
modify general probabilistic theories and quantum theory in particular.
\end{abstract}

\maketitle

%%%%%%%%%%%%%
\section{Introduction}
%%%%%%%%%%%%%

The framework of general probabilistic theories (GPTs) provides an abstract setting for possible physical theories based on
operational principles. Containing not only quantum and classical
theories but also countless toy theories in between and beyond,
GPTs give us means to study well-known properties of quantum
theory (such as measurement incompatibility \cite{BuHeScSt13}, steering
\cite{StBu14,Banik15}, entanglement \cite{AuLaPaPl19} and no-information-without-disturbance \cite{HeLePl19}) in a more general setting.
This allows us to formulate
and examine these properties in different theories, quantify them
and even compare different theories to each other based on how
these properties behave within them. Many properties that were
thought to be special features of quantum theory have actually been shown to be general among all non-classical probabilistic
theories, the no-broadcasting theorem being perhaps the most well known example \cite{BaBaLeWi07}.

One of the most long-standing motivations has been to provide a set of physical principles, formulated in the GPT framework, that would lead to an axiomatic derivation of quantum theory.
In recent years, followed by the success of quantum information theory, there has been a new boom of such efforts and many
information-theoretic axioms have been proposed from which the
quantum theory has been successfully derived \cite{QTIFF16}.
In addition to a full physical axiomatization, one can focus on some specific
property of interest and study it independently of the underlying
theory with the aim of finding something meaningful on the nature
of the property itself.

GPTs are based on operational notions of states, effects,
measurements, transformations, and composite systems so that by
specifying them one fixes the theory. The most important
operational principle for describing the state space $\state$ of
the theory is the statistical mixing of states which then leads to
$\state$ being a convex subset of a real vector space. As the most
simple type of measurements, the effects are then taken to be
affine functionals $e: \state \to [0,1]$ that give probabilities
on states so that $e(s)$ can then be interpreted as the
probability of observing the effect $e$ when the system is
measured in state $s \in \state$. The affinity of effects is a
result of the basic statistical correspondence between states and
measurements. A meter that corresponds to a measurement device can
then be described as a normalized collection of effects. A meter
provides a generalization of the positive operator-valued measure
(POVM) in quantum theory.

The assumption of taking all mathematically valid affine
functionals that give probabilities on states as physical effects
of the theory has been coined as the no-restriction hypothesis
\cite{ChDaPe10}. The no-restriction hypothesis is satisfied in
both classical and quantum theories, so it is usually accepted in
other theories too for the purpose of mathematical convenience. If
the no-restriction hypothesis is assumed, then the (single-system)
theory is completely determined by the state space alone. However,
as it has been pointed out, e.g., in \cite{JaLa13}, the
no-restriction hypothesis has no operational grounds. In fact, it
is possible to provide different kinds of consistent restrictions
on the set of effects that then give rise to new models and have
consequences even on the way the composite systems could be formed
\cite{JaLa13}. Other works beyond the no-restriction hypothesis
are, e.g., \cite{ChDaPe11, BaMuUd14, Wilce19}.

Interestingly, in the recent work \cite{SaGuAcNa18} it was shown
that the no-restriction hypothesis plays a significant role in the
correlations that can be achieved within quantum theory. In
particular, it was shown that a set of correlations that is close
to the set of quantum correlations, called the almost-quantum
correlations, violate the no-restriction hypothesis. This means
that no GPT with the no-restriction hypothesis is able to
reproduce the almost-quantum correlations. Therefore, the
no-restriction hypothesis may be a crucial part of singling out
the quantum correlations from other non-signalling theories.

Even if we restrict to the quantum theory, there is also a
practical motivation to investigate restrictions on meters and their consequences.
For example,
conventional measurement schemes for superconducting qubits and
polarized photons perform dichotomic measurements in the
computational basis or the rotated computational
basis~\cite{ClDeGiMaSc10}. Measurements with more than two outcomes
cannot be directly implemented for such two-level systems. To
obtain more than two outcomes one usually resorts to mixing and
post-processing dichotomic observables instead.
Therefore, only effectively dichotomic observables are available in
conventional quantum experimental setups with no entanglement
between the system and an ancilla. The use of the ancilla enables
one to perform measurements with a greater number of
outcomes, the number of measurement outcomes depending on the
dimension of the ancillary system.
Moreover, even the dichotomic measurements are never perfectly
projective~\cite{ClDeGiMaSc10,NaPo14}, which imposes a restriction on the noise content of accessible meters.
Another example of practical restrictions is that the effects for fermionic systems are not arbitrary and must satisfy the parity superselection rule~\cite{AmFi17}.

In the current work we consider restrictions not only at the level
of effects but also on the level of meters, and we show that the
previously studied effect restrictions are not enough to capture
all operationally valid restrictions. We propose an operational
condition that any restriction on meters should satisfy, namely
the \emph{simulation closedness} criterion. In accordance with the operational interpretation of GPTs, for a given set of meters there are two classical operations one can always implement that will lead to some outcome statistics differing from those of any other meter that may be used. In particular, similarly to mixing states, one can choose
to mix meters, and after the measurement it is possible to
post-process the obtained outcomes. The scheme consisting of both
mixing and post-processing of meters, called the measurement
simulability, has been previously studied in \cite{OsGuWiAc17,
GuBaCuAc17, FiHeLe18}. Our operational condition of simulation
closedness for meters then states that given a set of allowed
meters as a restriction, also all meters that can be obtained by
the simulation scheme from the allowed ones should be included in
the restriction as well. A violation of this condition would mean
that some classical procedure consisting of mixing and
post-processing of outcomes is not allowed, and that would
therefore be a weird and unphysical restriction.

We show that the introduced operational restrictions can be
divided into three disjoint classes: (R1) restrictions on meters
that are dictated by the restrictions on effects, (R2)
restrictions on meters that do not restrict the effects in any
way, and (R3) restrictions on meters that cannot be reproduced by
any restriction solely on effects, but nevertheless restrict the
set of effects as well. We demonstrate these restrictions in
quantum theory.

Our investigation is organized as follows. A brief overview of the
relevant concepts is given in Sec.~\ref{sec:prelim}. In
Sec.~\ref{sec:restrictions} we introduce the classification of
operational restrictions into three disjoint classes (R1)--(R3).
In Sec.~\ref{sec:cea} we characterize those effect restrictions that give simulation closed restrictions of type (R1) and examine convex effect algebras and their
subalgebras and see how they are related to these restrictions. Effectively $n$-tomic theories are presented as a class of
restrictions of type (R2) and they are examined in
Sec.~\ref{sec:ntomic}.
In Sec.~\ref{sec:R3} we give examples of restrictions that belong to (R3).
Finally, in Sec.~\ref{sec:conclusions} we summarize our investigation.

%%%%%%%%%%%%%
\section{Preliminaries}\label{sec:prelim}
%%%%%%%%%%%%%

%%%%%%%%%%%%%
\subsection{States, effects, meters}\label{sec:states}
%%%%%%%%%%%%%
We start by recalling the ordered vector space formulation of GPTs
(for more details see, e.g., \cite{BaWi11}). The state space
$\state$ of a GPT is a compact convex subset of a
finite-dimensional real vector space $V$. Whereas compactness and
the finite-dimensionality of the state space are merely technical
assumptions, the convexity follows from the possible statistical
mixing of the states: if we can prepare our system in states $s_1
\in \state$ or $s_2 \in \state$, by fixing some $p \in [0,1]$ we
can choose to use state $s_1$ with probability $p$ and state $s_2$
with probability $1-p$ in each round of the experiment so that $p
s_1 + (1-p) s_2$ must be a valid state in $\state$.

If $\dim(\mathrm{aff}(\state)) =d$, then $V$ can be chosen to be
$(d+1)$-dimensional and $\state$ forms a compact base for a closed
generating proper cone $V_+$ \footnote{A subset $C \subset V$ of a
vector space $V$ is a (convex) cone if $C +C \subseteq C$ and
$\alpha C \subseteq C$ for every $\alpha\in\real^+$. Furthermore,
$C$ is a proper cone if $C \cap (- C) = \{0\}$ and generating if
$C -C = V$. A subset $B \subset C$ is a base of $C$ if for every
$x \in C \setminus \{0\}$ there exists unique $\beta > 0$ and $b
\in B$ such that $x = \beta b$.}. The cone $V_+$ defines a partial
order in $V$ in the usual way; we denote $v \leq w$ (or $v
\leq_{V_+} w$ if we want to explicitly write the cone to avoid
confusion) if $w-v \in V_+$. Thus, $V_+$ consists of all of the
positive elements induced by this order. As a base of $V_+$, the
state space $\state$ can be expressed in terms of a strictly
positive functional $u \in V^*$ as
\begin{equation}
\state = \{ s \in V \, | \, s \geq 0,  \ u(s)=1 \}.
\end{equation}

The effect space $\effect(\state)$ consists of affine functionals $e: \state \to [0,1]$ giving probabilities on states: we interpret $e(s)$ as the probability that the effect $e$ is observed when the system is measured in state $s \in \state$. Affinity of effects is a result of them respecting the basic statistical correspondence of states and effects:
\begin{equation}
e(p s_1 +(1-p)s_2) = p e(s_1) +(1-p) e(s_2)
\end{equation}
for all $p \in [0,1]$, $s_1,s_2 \in \state$ and $e \in \effect(\state)$.

In the ordered vector space formulation we can express the effect
space as $\effect(\state) = V^*_+ \cap (u- V^*_+)$, where $V^*_+$
is the (closed generating proper) positive dual cone\footnote{Dual
cone $C^*\subset V^*$ of a cone $C \subset V$ consists of positive
linear functionals on $C$, i.e., $C^* = \{ f \in V^* \, | \, f(x)
\geq 0 \ \forall x \in C\}$. } of $V_+$ in the dual space $V^*$
and $u$ is the unit effect in $V^*_+$. Explicitly,
\begin{equation}
\effect(\state) = \{ e \in V^* \, | \, o \leq e \leq u \},
\end{equation}
where $o$ is the zero effect that gives value $0$ for every state and where the partial order is now the dual order induced by the dual cone $V^*_+$.

An effect $f \in \effect(\state) \subset V^*$, $f \neq o$, is called \emph{indecomposable} if whenever a decomposition $f= f_1 +f_2$ of $f$ into a sum of some other nonzero effects $f_1, f_2 \in \effect(\state)$ implies that $f= \alpha_1 f_1 = \alpha_2 f_2$ for some $\alpha_1, \alpha_2 >0$. The indecomposable effects are precisely the effects lying on the extreme rays of the dual cone $V^*_+$.
It was shown in \cite{KiNuIm10} that every effect can be decomposed into a sum of some indecomposable effects and that indecomposable extreme effects exist in all GPTs.

For an effect $f \in \effect(\state)$, we denote by $\lmin(f)$ and $\lmax(f)$ its smallest and largest values on $\state$, i.e., $\lmin(f) = \inf_{s \in \state} f(s)$ and $\lmax(f) = \sup_{s \in \state} f(s)$.
We note that these are attained because $\state$ is compact and $f$ is continuous.

A meter $\A$ with an $n$ outcomes is a mapping $\A: x \to \A_x$
from an outcome set $\Omega_\A = \{1, \ldots, n\} \subset \nat$ to
the set of effects $\effect(\state)$ such that the normalization
condition $\sum_{x \in \Omega_\A} \A_x = u$ is satisfied. Thus,
the set $\Omega_\A$ includes all the possible outcomes of the
experiment where the meter $\A$ is used, the normalization
condition guarantees that some outcome is registered, and
$\A_x(s)$ can be then interpreted as the probability that outcome
$x \in \Omega_\A$ was observed when the systems was in the state
$s \in \state$ and meter $\A$ was used to measure the system. We
denote the set of meters on $\state$ as $\meter(\state)$, or
simply as $\meter$ if the state space is understood from the
context.

For the purpose of this work it is worth noting that when we presented the usual definition of the effect space, no further restrictions on its elements was given. This means that all mathematically
valid functionals (i.e., affine functionals that give
probabilities on states) are also considered to be valid physical
effects in the theory. This assumption is commonly called the
no-restriction hypothesis. In this work we give operationally
justifiable restrictions that we pose on the unrestricted set of
effects/meters but unless otherwise stated, the underlying set of
effects of the theory is taken to be unrestricted.

\begin{example}
In finite-dimensional quantum theory, the state space $\state(\hi)$ consists of positive trace-1 operators on a finite-dimensional Hilbert space $\hi$, i.e.,
\begin{equation}
\state(\hi) := \{ \varrho \in \lsh \, | \, \varrho \geq O, \ \tr{\varrho } = 1\},
\end{equation}
where $O$ is the zero operator on $\hi$, $\lsh$ denotes the real vector space of self-adjoint operators on $\hi$ and the order is induced by the cone of positive-semidefinite operators on $\hi$, i.e., $A \geq O$ if and only if $\ip{\varphi}{A \varphi} \geq 0$ for all $\varphi \in \hi$.

The effect space $\effect(\sh)$ can be shown to be isomorphic to the set of selfadjoint operators between the zero operator $O$ and the identity operator $\id$, i.e.,
\begin{equation}
\effect(\sh) \cong \eh := \{ E \in \lsh \, | \, O \leq E \leq \id \},
\end{equation}
where naturally the zero operator $O$ corresponds to the zero functional $o$ and the identity operator $\id$ corresponds to the unit effect $u$.

Each meter on $\sh$ with a finite number of outcomes can be
associated with a positive operator-valued measure (POVM) $\A: x
\to \A(x)$ from a finite outcome set $\Omega_\A$ to the set of
effects $\eh$ such that $\sum_{x \in \Omega_\A} \A(x) = \id$.
\end{example}

%%%%%%%%%%%%%
\subsection{Simulation of meters}
%%%%%%%%%%%%%
Given a set of measurement devices (meters) one can always choose
to do some classical manipulations with the measurement data
outputted by the devices. For instance, one can consider if it is
possible to construct some new meters by classically manipulating
the pre-existing meters and their measurement data. These type of
considerations have led to the concept of measurement simulability
and have been studied in \cite{OsGuWiAc17, GuBaCuAc17, FiHeLe18,
CaHeMiTo19}.

By classical manipulations we mean mixing the meters and/or
post-processing the outcomes of the meters: If we have meters
$\B^{(1)}, \ldots, \B^{(m)}$ we can assign to them probabilities
$p_1, \ldots, p_m$ of using the different meters in each round of
the measurement process so that we obtain a mixed meter $\B =
\sum_i p_i \B^{(i)}$. In addition to mixing, we can classically
post-process the measurement outcomes of any $\B^{(i)}$ by
assigning a stochastic post-processing matrix $\nu^{(i)} = \left(
\nu_{xy}^{(i)} \right)_{x \in \Omega_{\B^{(i)}}, y \in
\Omega_{\A^{(i)}}}$ to each $\B^{(i)}$, where $\Omega_{\B^{(i)}}$
is the outcome set of the pre-existing meter $\B^{(i)}$ and
$\Omega_{\A^{(i)}}$ is some other outcome set such that
$\nu^{(i)}_{xy} \geq 0$ and $\sum_{y \in \Omega_{\A^{(i)}}}
\nu^{(i)}_{xy} =1$ for all $x \in \Omega_{\B^{(i)}}$, $y \in
\Omega_{\A^{(i)}}$. We can use $\nu^{(i)}$ to define a new meter
$\A^{(i)} = \nu^{(i)} \circ \B^{(i)}$ with outcome set
$\Omega_\A^{(i)}$ by setting $\A^{(i)}_y = \sum_{x \in
\Omega_{\B^{(i)}}} \nu^{(i)}_{xy} \B^{(i)}_x$ for all $y \in
\Omega_{\A^{(i)}}$. Here the matrix element $\nu_{xy}$ can thus be
interpreted as the transition probability that the outcome $x$ is
mapped into outcome $y$.

By combining both mixing and post-processing we get the simulation scheme which results in a new meter $\A$ defined by
\begin{equation}
\A_y = \sum_{i=1}^m p_i (\nu^{(i)} \circ \B^{(i)})_y = \sum_{i=1}^m \sum_{x \in \Omega_\B} p_i \nu^{(i)}_{xy} \B^{(i)}_x,
\end{equation}
for all $y \in \Omega_\A$, where we have set all the outcome sets $\Omega_{\B^{(i)}}$ equal, and denoted the resulting outcome set $\Omega_\B$, by adding zero outcomes if needed, and similarly for $\Omega_\A$.

We denote the set of meters obtained from the meters $\B^{(1)}, \ldots, \B^{(m)}$ by this simulation scheme with some probability distribution $(p_i)_i$ and post-processings $\nu^{(i)}$ by $\simu{\{ \B^{(1)}, \ldots, \B^{(m)} \}}$. If we have a (possibly infinite) set of meters $\mathcal{B}$, we denote by $\simu{\mathcal{B}}$ the set of meters that can be simulated by using some finite subset of $\mathcal{B}$, and call meters in $\mathcal{B}$ as simulators. One can show that $\simu{\mathcal{B}}$ is closed both under post-processing and mixing, i.e., $\simu{\mathcal{B}}$ is convex and $\nu \circ \B \in \simu{\mathcal{B}}$ for any post-processing $\nu$ and meter $\B \in \simu{\mathcal{B}}$.

Being considered as a mapping on the power set $2^\meter$, the
simulation map $\simu{\cdot }$ can be shown to be a closure
operator so that it satisfies the following three properties for
all subsets $\mathcal{B}, \mathcal{C} \subseteq \meter$:
\begin{enumerate}
\item[(SIM1)] $\mathcal{B} \subseteq \simu{\mathcal{B}}$
\item[(SIM2)] $\simu{\simu{\mathcal{B}}} = \simu{\mathcal{B}}$
\item[(SIM3)] $\mathcal{B} \subseteq \mathcal{C} \Rightarrow \simu{\mathcal{B}} \subseteq \simu{\mathcal{C}}$
\end{enumerate}

We call a subset of meters $\mathcal{B}$ \emph{simulation closed} if the equality holds in (SIM1), i.e., $\simu{\mathcal{B}} = \mathcal{B}$. By the property (SIM2) we see that $\simu{\mathcal{B}}$ is simulation closed for any $\mathcal{B} \subseteq \meter$. Simulation closed sets have some basic properties. In particular, if $\mathcal{B}_i$, $i\in I$, are simulation closed sets, then also $\bigcap_{i \in I} \mathcal{B}_i $ is simulation closed.

%%%%%%%%%%%%%
\section{Three types of operational restrictions}\label{sec:restrictions}
%%%%%%%%%%%%%

In this work, by a restriction we will mean that the allowed or
possible meters belong to a subset $\rmeter\subset\meter$. We require the following condition for all restrictions:
\begin{enumerate}
\item[(SC)] simulation closedness: $\simu{\rmeter}=\rmeter$
\end{enumerate}
As has been explained above, given a set of meters, we can always
choose to mix them or post-process their outcomes so that any
meter that can be simulated this way from the pre-existing meters
should always be a feasible meter as well. Given a non-simulation
closed restriction $\rmeter$, we can make it simulation closed by
taking its simulation closure $\simu{\rmeter}$.

 We note that simulation closedness implies that all trivial meters are always included in the restriction as they can be post-processed from any meter. By a trivial meter we mean a meter $\T$ of the form $\T_x = p_x u$ for all $x \in \Omega_\T$ for some probability distribution $(p_x)_x$ on $\Omega_\T$ so that it does not give any information about the input state. In practice, trivial meters can always be implemented just by ignoring the input state and choosing an outcome according to some fixed probability distribution.

In the following, by a restriction we mean a choice $\rmeter\subset\meter$ that satisfies the condition (SC). We recall that the range of a meter $\A$ can be expressed as $\ran(\A) = \{ \sum_{y \in \tilde{\Omega}} \A_y \, | \, \tilde{\Omega} \subseteq \Omega_\A \}$. We use the following notation.
\begin{itemize}
\item For a subset $\rmeter\subset\meter$, we denote by $\effect_{\rmeter}$ the set of all $e\in\effect$ such that $e\in\ran(\A)$ for some $\A\in\rmeter$.
\end{itemize}
Given a restriction $\rmeter$, the set of possible effects is then $\effect_{\rmeter}$.

We can also consider restrictions on meters induced by some restriction on effects. For this, we also use the following notation:
\begin{itemize}
\item For a subset $\reffect \subset \effect$, we denote by $\meter_{\reffect}$ the set of all $\A\in\meter$ such that $\ran(\A) \subset \reffect$.
\end{itemize}

As in \cite{JaLa13}, we impose some consistency conditions for $\reffect$ to generate a restriction $\meter_{\reffect}$:
\begin{itemize}
\item[(E1)] $u \in \reffect$ as it is an essential part of the definition of a meter, and
\item[(E2)] for every $e \in \reffect$, there exists $\A \in \meter_{\reffect}$ such that $e \in \ran(\A)$, i.e., for every physical effect $e \in \reffect$ we must have a way to implement it as a part of some meter.
\end{itemize}
As previously, (SC) is also required to hold for restrictions $\meter_{\reffect}$ given by some effect restriction $\reffect$.

The previous considerations lead to the following classification of measurement restrictions into three disjoint cases. Firstly, we can have
\begin{enumerate}
\item[(R1)] $\rmeter = \meter_{\reffect}$ for some $\reffect\subset\effect$.
\end{enumerate}
In this case the restriction takes place essentially on the level
of effects and the limitations on meters can be seen as a
consequence. We emphasize that the set $\reffect$ must be chosen
specifically so that $\meter_{\reffect}$ is simulation closed. We
will show that a necessary and sufficient condition for an effect
restriction $\reffect$ that satisfies the consistency conditions
(E1) and (E2) to be simulation closed is that $\reffect$ is a
convex subset of $\effect$. In particular, this is the case when
$\reffect$ is a convex subalgebra of $\effect$; this will be
discussed in detail in Section \ref{sec:cea}.

Secondly, we can have
\begin{enumerate}
\item[(R2)] $\effect_{\rmeter}=\effect$ (but $\rmeter \neq
\meter$).
\end{enumerate}
In this case, the restriction does not limit the possible effects but only how they compose into meters.
A restriction satisfying (R2) cannot satisfy (R1), as $\meter_\effect=\meter$ and we are assuming that a restricted set $\rmeter$ is a proper subset of $\meter$.
An important class of restrictions of type (R2) are restrictions to \emph{effectively $n$-tomic} meters \cite{GuBaCuAc17, FiHeLe18} and, in fact, any restriction of type (R2) contains effectively dichotomic meters.
This class of restrictions is described and studied in Sec. \ref{sec:ntomic}.

The third possibility is that the restriction is neither (R1) nor
(R2). This means that
\begin{enumerate}
\item[(R3)] $\effect_{\rmeter}\subset\effect$ and $\rmeter \neq \meter_{\reffect}$ for any $\reffect\subset\effect$.
\end{enumerate}
In this case there are limitations already at the level of effects, but there are also limitations that come visible only at the level of meters. Restrictions of this type will be considered in Sec. \ref{sec:R3}.

Finally, we note that there can also be other operational requirements that one might want to hold depending on the restriction.
One such requirement might be tomographic completeness:
\begin{enumerate}
\item[(TC)] tomographic completeness: $\A(s_1)=\A(s_2) \  \forall \A\in\rmeter  \ \Rightarrow \  s_1=s_2$.
\end{enumerate}
This requirement is relevant, e.g., if one starts from a more
general framework of convex structures and then needs to justify
that the set of states is a convex subset of a real vector space
\cite{Gudder73}. However, in this work we concentrate on (SC) and
we do not study other requirements.

\begin{remark}
In \cite{JaLa13}, in addition to (E1) and (E2), also convexity of $\reffect$ along with two other consistency conditions are required to hold:
\begin{itemize}
\item[(E3)] for any two effects $e, f \in \reffect$ such that $e,f
\in \ran(\A)$ for some physical meter $\A$, we must have $e+f \in
\reffect$, and \item[(E4)] the adjoint $T^*$ of a linear state
transformation $T: \state \to \state$, defined by $[T^*(e)](s) =
e(T(s))$ for all states and effects, must give a valid effect for
all valid effects, i.e., $T^*(e) \in \reffect$ for all $e \in
\reffect$.
\end{itemize}
In particular, one can show that the effect restrictions considered in \cite{JaLa13} induce restrictions on meters that are simulation closed. We see that the condition (E3) is built in the definition of $\meter_{\reffect}$: if $e,f \in \reffect$ such that $e,f \in \ran(\A)$ for some \emph{physical} meter $\A$, then according to our definition of physicality, we must have $\A \in \meter_{\reffect}$ so that in particular $e+f \in \ran(\A) \subset \reffect$.

The point we want to emphasize is that even if we are considering restrictions on meters given by restrictions on effects, we must also consider how our physical effects are connected to our physical meters. In our work this is done by defining $\meter_{\reffect}$ and in \cite{JaLa13} this is addressed by the condition (E3). Thus, the condition (E3) is different in nature to (E1) and (E2) as it is not expressed only in terms of effects but involves also meters. Regarding (E4), we do not consider state transformations in our current work.
\end{remark}

%%%%%%%%%%%%%
\section{Restriction class (R1) and convex effect algebras}\label{sec:cea}
%%%%%%%%%%%%%

In this section we provide a characterization of restrictions of
type (R1). We then consider a more special case of convex effect
restrictions, namely the convex effect subalgebras. This type of
restriction has been used, e.g., in \cite{BaMuUd14}. We derive a
representation theorem for convex effect subalgebras and we also
demonstrate that there are physically meaningful (R1) restrictions
that do not have the structure of a  convex effect subalgebra.
The material presented in Subsections  \ref{sec:cea-intro} and \ref{sec:cea-char}
 has some overlap with the recent work \cite{Gudder19} of one of the present authors.
We include this material to make the present investigation self-contained.

%%%%%%%%%%%%%
\subsection{Characterization of (R1) restrictions}
%%%%%%%%%%%%%

As was described earlier, we consider restrictions of type (R1) to be induced by a subset $\reffect\subset \effect$ of effects that satisfies the consistency conditions (E1) and (E2) such that $\meter_{\reffect}$ describes the physical, restricted set of meters that is simulation closed.
We start by showing some simple consequences of the consistency conditions (E1) and (E2) which will be seen useful later.

\begin{lemma}\label{lemma:effect-restrictions}
Let $\reffect \subset \effect$ be a restriction on effects such that consistency conditions (E1) and (E2) are satisfied. Then
\begin{itemize}
\item[(a)] $o \in \reffect$,
\item[(b)] for each $e \in \reffect$ also the complement effect $u-e \in \reffect$.
\end{itemize}
\end{lemma}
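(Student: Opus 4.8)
The plan is to exploit the definition of $\meter_{\reffect}$ together with two elementary structural features of the range of a meter. Recall that for any meter $\A$ one has $\ran(\A) = \{ \sum_{y \in \tilde{\Omega}} \A_y \, | \, \tilde{\Omega} \subseteq \Omega_\A \}$. Two observations drive the whole argument. First, the empty subset $\tilde{\Omega} = \emptyset$ yields the empty sum $o$, so $o \in \ran(\A)$ for every meter $\A$; and the full subset $\tilde{\Omega} = \Omega_\A$ yields, by the normalization $\sum_{y} \A_y = u$, the unit $u \in \ran(\A)$. Second, $\ran(\A)$ is closed under complementation: if $e = \sum_{y \in \tilde{\Omega}} \A_y$, then $u - e = \sum_{y \in \Omega_\A \setminus \tilde{\Omega}} \A_y$ is realized by the complementary subset $\Omega_\A \setminus \tilde{\Omega}$, hence $u - e \in \ran(\A)$. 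The point is that both assertions of the lemma follow by transporting these facts from $\ran(\A)$ into $\reffect$ through the inclusion $\ran(\A) \subseteq \reffect$ that defines membership $\A \in \meter_{\reffect}$.

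For part (a), I would first use (E1) to get $u \in \reffect$, and then invoke (E2) applied to the effect $u$ to produce a meter $\A \in \meter_{\reffect}$ with $u \in \ran(\A)$; the mere existence of such a meter is all that is needed. Since $\A \in \meter_{\reffect}$ means $\ran(\A) \subseteq \reffect$, and since $o \in \ran(\A)$ by the first observation, I conclude $o \in \reffect$.

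For part (b), I would fix $e \in \reffect$, use (E2) to obtain a meter $\A \in \meter_{\reffect}$ with $e \in \ran(\A)$, say $e = \sum_{y \in \tilde{\Omega}} \A_y$ for some $\tilde{\Omega} \subseteq \Omega_\A$, and then apply the complementation observation to get $u - e \in \ran(\A) \subseteq \reffect$, which is exactly the claim. I expect no genuine obstacle: the content of the lemma is entirely definitional, and the only thing to be careful about is to read membership in $\meter_{\reffect}$ correctly, namely that it forces the \emph{whole} range, not just the single effect produced by (E2), to lie in $\reffect$. Notably, neither (a) nor (b) requires the simulation-closedness condition (SC); the conditions (E1), (E2) and the definitions of $\ran$ and $\meter_{\reffect}$ already suffice.
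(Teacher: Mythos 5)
Your proof is correct and follows essentially the same route as the paper's: both parts rest on (E1)/(E2) producing a meter $\A \in \meter_{\reffect}$, then transporting $o \in \ran(\A)$ and $u-e \in \ran(\A)$ into $\reffect$ via the defining inclusion $\ran(\A) \subseteq \reffect$. Your write-up merely makes explicit (via the empty and complementary subsets of $\Omega_\A$) the two range facts that the paper treats as immediate, and your observation that (SC) is never needed is consistent with the paper's argument.
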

\begin{proof}
(a) By (E1) and (E2) there exists a meter $\A \in \meter_{\reffect}$ such that $u \in \ran(\A)$ and since $o \in \ran(\B)$ for any meter $\B \in \meter$, we must have from the definition of $\meter_{\reffect}$ that $o \in \ran(\A) \subset \reffect$.

(b) By (E2) for any $e \in \reffect$, there exists a meter $\A \in \meter_{\reffect}$ such that $e \in \ran(\A)$. Since $u-e \in \ran(\A)$, we have from the definition of $\meter_{\reffect}$ that $u-e \in \reffect$.
\end{proof}

We can now give a complete characterization of effect restrictions $\reffect$ that give rise to restrictions of type (R1).

\begin{theorem}\label{thm:R1}
Let $\reffect \subset \effect$ be a restriction on effects such that consistency conditions (E1) and (E2) are satisfied. Then $\meter_{\reffect}$ is simulation closed if and only if $\reffect$ is convex.
\end{theorem}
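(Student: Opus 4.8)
The plan is to prove the two implications separately, using the range operation as the bridge between the effect restriction $\reffect$ and the induced meter restriction $\meter_{\reffect}$, and exploiting that $\simu{\meter_{\reffect}}$ is closed under both mixing and post-processing.

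For the direction ``simulation closed $\Rightarrow$ convex'' I would argue directly. Take $e,f\in\reffect$ and $p\in[0,1]$; the goal is $pe+(1-p)f\in\reffect$. By (E2) there are meters $\A,\B\in\meter_{\reffect}$ with $e\in\ran(\A)$ and $f\in\ran(\B)$. Writing $e=\sum_{x\in\tilde{\Omega}}\A_x$ for a subset $\tilde{\Omega}\subseteq\Omega_\A$, the deterministic post-processing that merges the outcomes in $\tilde{\Omega}$ into one outcome and the remaining ones into another turns $\A$ into the dichotomic meter with effects $e$ and $u-e$; likewise $\B$ yields the dichotomic meter with effects $f$ and $u-f$. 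Both are post-processings of elements of $\meter_{\reffect}$, hence lie in $\simu{\meter_{\reffect}}=\meter_{\reffect}$. Mixing them with weights $p$ and $1-p$ produces a two-outcome meter $\C\in\simu{\meter_{\reffect}}=\meter_{\reffect}$ with $\C_1=pe+(1-p)f$. Since $\C\in\meter_{\reffect}$ means $\ran(\C)\subset\reffect$ by definition, we conclude $pe+(1-p)f=\C_1\in\reffect$, so $\reffect$ is convex.

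For the converse, suppose $\reffect$ is convex and take any $\A\in\simu{\meter_{\reffect}}$; I must show $\ran(\A)\subset\reffect$, which gives $\A\in\meter_{\reffect}$ and hence (with (SIM1)) the equality $\simu{\meter_{\reffect}}=\meter_{\reffect}$. Expanding the simulation formula, $\A$ is built from finitely many $\B^{(1)},\dots,\B^{(m)}\in\meter_{\reffect}$ with probabilities $p_i$ and post-processings $\nu^{(i)}$. For an arbitrary $\tilde{\Omega}\subseteq\Omega_\A$, collecting terms gives $\sum_{y\in\tilde{\Omega}}\A_y=\sum_i p_i\,g^{(i)}$, where $g^{(i)}=\sum_x c^{(i)}_x\B^{(i)}_x$ and $c^{(i)}_x=\sum_{y\in\tilde{\Omega}}\nu^{(i)}_{xy}\in[0,1]$ by the row-stochasticity of $\nu^{(i)}$. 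As $\reffect$ is convex and $\sum_i p_i=1$, it suffices to show each $g^{(i)}\in\reffect$.

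The single nontrivial point, which I expect to be the main obstacle, is to pass from such a \emph{fractional} combination $g^{(i)}$ to the \emph{subset-sum} (range) effects that are guaranteed to lie in $\reffect$. I would establish the representation that for any meter $\B\in\meter_{\reffect}$ and coefficients $c_x\in[0,1]$ the effect $\sum_x c_x\B_x$ is a convex combination of the range effects $\{\sum_{x\in\tilde{\Omega}}\B_x:\tilde{\Omega}\subseteq\Omega_\B\}$. This follows by assigning to each subset the product weight $\lambda_{\tilde{\Omega}}=\prod_{x\in\tilde{\Omega}}c_x\prod_{x\notin\tilde{\Omega}}(1-c_x)$, i.e.\ including each outcome $x$ independently with probability $c_x$; these weights are nonnegative, sum to $1$, and have marginals $\sum_{\tilde{\Omega}\ni x}\lambda_{\tilde{\Omega}}=c_x$, so that $\sum_{\tilde{\Omega}}\lambda_{\tilde{\Omega}}\sum_{x\in\tilde{\Omega}}\B_x=\sum_x c_x\B_x$. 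Since $\B^{(i)}\in\meter_{\reffect}$ forces every range effect of $\B^{(i)}$ into $\reffect$, convexity yields $g^{(i)}\in\reffect$, hence $\sum_{y\in\tilde{\Omega}}\A_y\in\reffect$; as $\tilde{\Omega}$ was arbitrary, $\ran(\A)\subset\reffect$, completing the proof.
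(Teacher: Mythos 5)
Your proof is correct, and while its overall skeleton matches the paper's --- the forward direction via dichotomic meters obtained from (E2) and simulation closedness, and the backward direction via reducing each range element $\sum_{y\in\tilde{\Omega}}\A_y$ to showing that every fractional combination $\sum_x c_x \B^{(i)}_x$ with $c_x\in[0,1]$ lies in $\reffect$ --- the key technical step is handled by a genuinely different decomposition. The paper sorts the coefficients and uses a telescoping (Abel summation) identity, $\sum_{x=1}^n c_x \B_x = \sum_{k=1}^n (c_k - c_{k-1})\sum_{x\geq k}\B_x$ with $c_0=0$, expressing the fractional combination as a subconvex combination of the $n$ tail sums (which are range effects), and then pads with the zero effect weighted by $1-c_n$, invoking its Lemma~\ref{lemma:effect-restrictions} to put $o$ in $\reffect$. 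You instead use independent randomized rounding: product weights $\lambda_{\tilde{\Omega}}=\prod_{x\in\tilde{\Omega}}c_x\prod_{x\notin\tilde{\Omega}}(1-c_x)$ over all $2^n$ subsets, whose marginals reproduce the $c_x$ exactly. Your decomposition needs no ordering and no padding --- the zero effect enters automatically as the empty subset-sum, which lies in $\ran(\B^{(i)})\subset\reffect$ by definition, so the analogue of Lemma~\ref{lemma:effect-restrictions} is not even needed --- at the cost of exponentially many terms, which is harmless since only closure under finite convex combinations is used. In the forward direction you also lean more directly on the hypothesis (SC): you realize the dichotomic meters $(e,u-e)$ and $(f,u-f)$ as deterministic post-processings of the meters supplied by (E2), and their mixture as a simulation, rather than verifying range membership elementwise as the paper does; both arguments are valid, and yours deploys the assumption where it is most natural.
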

\begin{proof}
Let first $\meter_{\reffect}$ be simulation closed. If $e, f \in \reffect$ then from (E2) it follows that there exist $\A,\B \in \meter_{\reffect}$ such that $e \in \ran(\A)$ and $f \in \ran(\B)$. In fact, we have that $u-e \in \ran(\A)\subset \reffect $ and $u-f \in \ran(\B)\subset \reffect $ so that  if we define two dichotomic meters $\E$ and $\F$ with effect $e, u-e$ and $f, u-f$ respectively, then $\E, \F \in \meter_{\reffect}$. Now from (SC) it follows that $t \E +(1-t) \F \in \meter_{\reffect}$ for any $t \in [0,1]$ so that $t e +(1-t)f \in \ran(t \E +(1-t) \F) \subset \reffect$. Thus, $\reffect$ is convex.

Let now $\reffect$ be convex. Let $\A \in \simu{\meter_{\reffect}}$ so that there exist meters $\{\B^{(i)}\}_i \subset \meter_{\reffect}$, post-processings $\nu^{(i)}: \Omega_\B \to \Omega_\A$ and a probability distribution $(p_i)_i$ such that $\A = \sum_i p_i (\nu^{(i)} \circ \B^{(i)})$. We need to show that $\A \in \meter_{\reffect}$, i.e., that $\ran(\A) \subset \reffect$. Since $\ran(\A) = \{ \sum_{y \in \tilde{\Omega}} \A_y \, | \, \tilde{\Omega} \subseteq \Omega_\A \}$, we take $\tilde{\Omega} \subseteq \Omega_\A$ and consider the effect
\begin{equation}
\sum_{y \in \tilde{\Omega}} \A_y = \sum_{y \in \tilde{\Omega}} \sum_i \sum_{x \in \Omega_\B} p_i \nu^{(i)}_{xy} \B^{(i)}_x = \sum_i p_i \left[ \sum_{x \in \Omega_\B} \left( \sum_{y \in \tilde{\Omega}} \nu^{(i)}_{xy} \right) \B^{(i)}_x \right].
\end{equation}
Let us denote $\tilde{\nu}^{(i)}_x := \sum_{y \in \tilde{\Omega}} \nu^{(i)}_{xy} \in [0,1]$ so that
\begin{equation}
\sum_{y \in \tilde{\Omega}} \A_y = \sum_i p_i \left( \sum_{x \in \Omega_\B}\tilde{\nu}^{(i)}_{x} \B^{(i)}_x \right).
\end{equation}
From the convexity of $\reffect$ we see that if $\sum_{x \in \Omega_\B}\tilde{\nu}^{(i)}_{x} \B^{(i)}_x \in \reffect$ for all $i$, then $\sum_{y \in \tilde{\Omega}} \A_y \in \reffect$ which would prove the claim. Thus, we will fix $i$ and focus on $\sum_{x \in \Omega_\B}\tilde{\nu}^{(i)}_{x} \B^{(i)}_x$ and show that it is contained in $\reffect$.

Since $\Omega_\B = \{1, \ldots, n\}$ for some $n \in \nat$, we can rename the effects of $\B^{(i)}$ such that $\tilde{\nu}^{(i)}_x \leq \tilde{\nu}^{(i)}_{x'}$ for $x < x'$. If we set $ \tilde{\nu}^{(i)}_0 = 0$, one can confirm that
\begin{equation}\label{eq:telescope-sum}
\sum_{x=1}^n \tilde{\nu}^{(i)}_x \B^{(i)}_x = \sum_{k=1}^n \left[ \left(\tilde{\nu}^{(i)}_{k} - \tilde{\nu}^{(i)}_{k-1} \right) \sum_{x=k}^n \B^{(i)}_x \right].
\end{equation}
One sees that $\sum_{x=k}^n \B^{(i)}_x \in \ran(\B^{(i)}) \subset \reffect$ and that $\tilde{\nu}^{(i)}_{k} - \tilde{\nu}^{(i)}_{k-1} \geq 0$ for all $k \in \{1, \ldots, n\}$. Furthermore, we see that $\sum_{k=1}^n \left(\tilde{\nu}^{(i)}_{k} - \tilde{\nu}^{(i)}_{k-1} \right) = \tilde{\nu}^{(i)}_n \in [0,1]$ so that we can make the RHS of Eq. \eqref{eq:telescope-sum} a convex sum of the terms $\sum_{x=k}^n \B^{(i)}_x \in \ran(\B^{(i)})$ by adding a zero element $(1- \tilde{\nu}^{(i)}_n ) o$ which by Lemma \ref{lemma:effect-restrictions} must be included in $\reffect$.

Hence, $\sum_{x=1}^n \tilde{\nu}^{(i)}_x \B^{(i)}_x$ can be expressed as a convex combination of elements in $\reffect$ so that from the convexity of $\reffect$ is follows that $\sum_{x \in \Omega_\B}\tilde{\nu}^{(i)}_{x} \B^{(i)}_x \in \reffect$ for all $i$.
\end{proof}

%%%%%%%%%%%%%
\subsection{Convex effect algebras}\label{sec:cea-intro}
%%%%%%%%%%%%%

We start by recalling the notion of (abstact) convex effect algebra and the operational basis of this mathematical structure.
An effect algebra \cite{FoBe94} is a non-empty set $\effect$ with two distinguished elements $\enul$ and $\eid$ and a partially defined operation $\oplus$ that satisfies the following conditions:
\begin{itemize}
\item[(EA1)] if $e \oplus f$ is defined, then $f \oplus e$ is defined and $e \oplus f =f \oplus e$.
\item[(EA2)] if $e \oplus f$ and $(e \oplus f)\oplus g$ are defined, then $f \oplus g$ and $e \oplus (f\oplus g)$ are defined and $(e \oplus f)\oplus g = e \oplus (f\oplus g)$.
\item[(EA3)] for every $e \in\effect$, there is a unique $e'$ such that $e \oplus e' = \eid$.
\item[(EA4)] if $e \oplus \eid$ is defined, then $e=\enul$.
\end{itemize}

A physical interpretation of an effect algebra is that $\effect$ is a collection of events and the partial operation $\oplus$ describes joining of events. The element $\enul$ corresponds to event that never happens whereas $\eid$ corresponds to event that always happens. An important example of an effect algebra is the collection of all fuzzy sets on some set $X$.
An abstract effect algebra can be seen as a generalization of this structure, including the Hilbert space effect algebra as an important example.
It is clear that the set of all effects in a GPT also forms an effect algebra.

When thinking about the interpretation of an effect algebra as a collection of events, one could come up with some additional properties that would seem reasonable to require as axioms. However, several such properties can be derived from the defining conditions (EA1)--(EA4).
For instance, it can be shown \cite{FoBe94} that $(e')'=e$ and that the \emph{cancellation law} holds: if $e \oplus f = e \oplus g$, then $f=g$.

Let us then consider an effect algebra that describes events that correspond to outcomes, or collections of outcomes, in a measurement device or devices.
An operational interpretation of the partial operation $\oplus$ is that two outcomes are merged into one. Merging two outcomes is an irreversible action; if we are given the newly formed device, we cannot know which effects have been merged. There is, however, a way to split one outcome into two so that merging is a one side inverse to this procedure. This splitting goes as follows. When an outcome related to an effect $e$ occurs, we toss a coin and, depending on the result, either record the outcome as it was, or mark it as a new outcome. We thus obtain two effects, $e_{same}$ and $e_{new}$. Clearly, merging of the outcomes should give the original effect, thus $e_{same} \oplus e_{new} = e$. In this way we have introduced a map $e \mapsto e_{same}$ for every coin tossing probability $\alpha$.

Mathematically speaking, an effect algebra $\effect$ is a \emph{convex effect algebra} \cite{GuPu98} if for every effect $e\in\effect$ and real number $\alpha \in [0,1]$, we can form a new effect, denoted by $\alpha e$ such that the following conditions hold for every $\alpha,\beta\in [0,1]$ and $e,f\in\effect$:
\begin{itemize}
\item[(CEA1)] $\alpha(\beta e) = (\alpha\beta)e$.
\item[(CEA2)] $1 e = e$.
\item[(CEA3)] If $\alpha + \beta \leq 1$, then $\alpha e \oplus \beta e$ is defined and $(\alpha+\beta)e = \alpha e \oplus \beta e$.
\item[(CEA4)] If $e\oplus f$ is defined, then $\alpha e \oplus \alpha f$ is defined and $\alpha(e \oplus f) = \alpha e \oplus \alpha f$.
\end{itemize}
As we have described above, the map $(\alpha,e) \mapsto \alpha e$
can be interpreted as a splitting of $e$ into two effects, $\alpha
e$ and $(1-\alpha) e$. We point out that this mathematical
structure describes the action only at the level of individual
effects, not meters, which allows for other interpretations. We
can, for instance, interpret the action in a way that the residual
effect $(1-\alpha) e$ does not generate a new outcome but is
combined into some already existing outcomes.

It is shown in \cite{GuPu98} that if $\alpha,\beta \in [0,1]$ with $\alpha + \beta \leq 1$, then $\alpha e \oplus \beta f$ is defined for every $e,f\in\effect$.
This further implies that for any $\alpha\in [0,1]$ and any $e,f\in\effect$, the effect sum $\alpha e \oplus (1-\alpha) f$ is defined.
The resulting element is called a \emph{mixture} of $e$ and $f$.
Mixing of effects is therefore a derived notion in convex effect algebras.

%%%%%%%%%%%%%
\subsection{Characterization of convex effect algebras and subalgebras}\label{sec:cea-char}
%%%%%%%%%%%%%

 As we have seen earlier, the partial order in an effect algebra is derived from the partially defined effect sum.
 To construct concrete convex effect algebras, we can start from an ordered vector space and use that structure to form an effect algebra.
This construction works as follows.
Let $W$ be a finite dimensional real vector space, and let $C\subset W$ be a proper cone.
For any nonzero $u\in C$, we then denote $[0,u]_C:=\{e\in C: e \leq_C u \}$.
Then, for any $e,f \in [0,u]_C$, the combination $e \oplus f$ is defined if $e+f \leq_C u$, and then $e \oplus f := e+f$.
The set $[0,u]_C$ is a convex subset of $C$ and $0\in C$.
Therefore, $\alpha e \in [0,u]_C$ for any $e \in [0,u]_C$ and $0\leq \alpha \leq 1$.
In this way, $[0,u]_C$ is a concrete convex effect algebra, also called a \emph{linear effect algebra} \cite{GuPuBuBe99}.
The chosen vector $u$ is the identity element in $[0,u]_C$.

When forming linear effect algebras, we typically want $[0,u]_C$ to generate the vector space $W$, which means that $W$ is the linear span of vectors of the form $\alpha e$, where $\alpha\in\real^+$ and $e\in [0,u]_C$.
Due to the following result it is not restrictive to consider this kind of linear effect algebras when we investigate the properties of convex effect algebras.

\begin{theorem}(\cite{GuPu98})
Let $\effect$ be a convex effect algebra.
There exists a real vector space $W$, a cone $C$ and a nonzero element $u\in C$ such that $[0,u]_C$ generates $W$ and $\effect$ is affinely isomorphic to $[0,u]_C$.
\end{theorem}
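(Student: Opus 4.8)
The plan is to realize the abstract convex effect algebra $\effect$ as the order interval of a cone by first extracting a genuine convex-set structure from the mixing operation and then passing to the cone it spans. As noted after the axioms (CEA1)--(CEA4), for every $\alpha\in[0,1]$ and $e,f\in\effect$ the mixture $\alpha e \oplus (1-\alpha)f$ is defined; I would first record that the assignment $(\alpha,e,f)\mapsto \alpha e \oplus (1-\alpha)f$ turns $\effect$ into an abstract convex set. Writing this combination as a formal convex sum, the convex-set identities (idempotency $\alpha e\oplus(1-\alpha)e=e$, the degenerate cases $\alpha=0,1$, and the reparametrisation rule for iterated mixtures) all follow from (CEA1)--(CEA4) together with (EA1)--(EA3). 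The essential extra ingredient is the \emph{cancellation law} $e\oplus f=e\oplus g\Rightarrow f=g$, already recalled from \cite{FoBe94}, which makes this convex structure cancellative.

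With cancellativity in hand, the second step is to build the vector space and cone directly as the cone over $\effect$. I would form formal products $\lambda e$ with $\lambda\in\real^+$ and $e\in\effect$, identify $\lambda e$ with $(t\lambda)(\tfrac1t e)$ for $t\geq 1$ (consistency with the internal scalar action, by (CEA1)) and all $0\cdot e$ with a single zero, and define addition through the mixture,
\[
\lambda e + \mu f := (\lambda+\mu)\left(\tfrac{\lambda}{\lambda+\mu}\, e \oplus \tfrac{\mu}{\lambda+\mu}\, f\right),\qquad \lambda+\mu>0 .
\]
This is where most of the work lies: I must check that $+$ is well defined on equivalence classes, commutative and associative, and --- crucially --- that the resulting commutative monoid $C$ is \emph{cancellative}, which is exactly the point where the effect-algebra cancellation law is used. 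Once $C$ is a cancellative positive cone carrying a compatible $\real^+$-action, it embeds injectively into its group of formal differences $W:=C-C$, and the $\real^+$-action extends to a genuine $\real$-scalar multiplication making $W$ a real vector space and $C$ a convex cone with $C-C=W$.

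The third step is to identify the distinguished data and the order interval. I would set $u$ to be the image of $\eid$ and show that the canonical map $e\mapsto 1\cdot e$ embeds $\effect$ affinely into $W$, sending $\enul\mapsto 0$ and the partial operation $\oplus$ to the restriction of $+$: that is, $e\oplus f$ is defined in $\effect$ precisely when $e+f\leq_C u$, and then $e\oplus f=e+f$ (the forward direction follows from (CEA4) applied to $\tfrac12 e \oplus \tfrac12 f=\tfrac12(e\oplus f)$). The decisive verification is that the image of $\effect$ is exactly the order interval $[0,u]_C=\{x\in C: x\leq_C u\}$, with no spurious extra elements; the inclusion of the image in $[0,u]_C$ is immediate, and the reverse inclusion --- together with the properness $C\cap(-C)=\{0\}$ needed for $\leq_C$ to be a partial order --- is forced by axiom (EA4) (which prevents anything strictly above $\eid$) and the injectivity of the embedding. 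Since every element of $C$ is by construction a nonnegative multiple of an element of $\effect$, we get $C=\real^+\cdot[0,u]_C$, whence $[0,u]_C$ generates $W$ and the affine isomorphism $\effect\cong[0,u]_C$ is complete.

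The main obstacle I anticipate is the bookkeeping in the cone construction of the second step: verifying that addition defined via mixtures descends to the quotient and is associative, and that the monoid is cancellative so that the map into $W=C-C$ is injective. Everything else --- the convex-set axioms, properness, and the matching of $\oplus$ with $+$ on $[0,u]_C$ --- reduces to direct applications of (EA1)--(EA4) and (CEA1)--(CEA4), but the associativity and cancellativity of the formal sum are where the interplay between mixing and the effect-algebra cancellation law must be handled carefully.
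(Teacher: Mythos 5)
The paper itself does not prove this theorem: it is imported verbatim from \cite{GuPu98}, so your proposal can only be compared with the standard proof given there. Your outline is essentially that construction: make $\effect$ into a cancellative convex structure, build the cone of formal nonnegative multiples with addition defined through mixtures, pass to the group of formal differences to obtain $W$ and $C$, and identify $\effect$ with $[0,u]_C$. The steps you flag as delicate (well-definedness of the quotient, associativity, cancellativity of the formal sum) are indeed where the bookkeeping lies, and your treatment of them --- as well as of the forward inclusion of the image of $\effect$ into $[0,u]_C$ --- is sound.

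The genuine gap is in the final step, where you claim that the reverse inclusion $[0,u]_C\subseteq\Phi(\effect)$ (writing $\Phi$ for your embedding $e\mapsto 1\cdot e$) is ``forced by axiom (EA4) and the injectivity of the embedding.'' It is not. Concretely, take $x=\lambda e\in C$ with $\lambda>1$ and $x\leq_C u$; nothing in (EA4), which only forbids adjoining anything to $\eid$ \emph{inside} $\effect$, produces an element $d\in\effect$ with $\Phi(d)=x$, i.e.\ with $\tfrac{1}{\lambda}d=e$, and without such a $d$ the interval could a priori contain points outside the image. What is needed is a divisibility lemma: if $a\leq\beta\eid$ in $\effect$ with $0<\beta\leq 1$, then $a=\beta d$ for some $d\in\effect$. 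This is proved by construction: set $n=\lfloor 1/\beta\rfloor$ and $\theta=1/\beta-n\in[0,1)$; using $a\leq\beta\eid$ one checks inductively that the $k$-fold orthogonal sum of $a$ is defined and bounded by $k\beta\,\eid$ for $k\leq n$, so that $d:=a\oplus\dots\oplus a\oplus\theta a$ ($n$ copies of $a$) is defined, and (CEA1)--(CEA4) give $\beta d=a$. Applying this with $\beta=1/(\lambda+\gamma)$, where $u=\lambda e+\gamma g$ in $C$ (so that $\lambda\beta e\oplus\gamma\beta g=\beta\eid$, hence $\lambda\beta e\leq\beta\eid$), yields $d$ with $\beta d=\lambda\beta e$, and scalar cancellation then gives $\tfrac{1}{\lambda}d=e$. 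The same lemma with $\beta=\tfrac12$ also proves the direction you assert without argument, namely that $e+f\leq_C u$ forces $e\oplus f$ to be defined in $\effect$: one gets $\tfrac12 e\oplus\tfrac12 f\leq\tfrac12\eid$, so $\left(\tfrac12 e\oplus\tfrac12 f\right)\oplus\left(\tfrac12 e\oplus\tfrac12 f\right)$ is defined and, by generalized associativity and (CEA3), rearranges to $e\oplus f$. With this lemma supplied, your construction closes and coincides with the proof in \cite{GuPu98}.
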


We remark that this characterization of convex effect algebras shows a natural connection to the GPT framework.
Namely, if one starts from a GPT state space $\state\subset V$ (see Sec. \ref{sec:states}), then $W$ is the dual space $V^*$ and $C$ is the positive dual cone $V^*_+$.
More detailed discussions about this connection are provided in \cite{BeBu97,BuGuPu00}.

As with any algebraic structures, there are natural notions of substructures for effect algebras and convex effect algebras.
Namely, let $\effect$ be an effect algebra.
A nonempty subset $\reffect\subset\effect$ is a \emph{subalgebra} of $\effect$ if the following conditions hold:
\begin{itemize}
\item[(SA1)] $\eid\in\reffect$.
\item[(SA2)] $e'\in\reffect$ for all $e\in\reffect$.
\item[(SA3)] $e\oplus f\in\reffect$ for all $e,f\in\reffect$ such that $e\oplus f$ is defined in $\effect$.
\end{itemize}
If $\effect$ is a convex effect algebra, then a subalgebra $\reffect$ is a \emph{convex subalgebra} of $\effect$ if it satisfies also the following condition:
\begin{itemize}
\item[(SA4)] $\alpha e\in\reffect$ for all $\alpha\in[0,1]$ and
$e\in\reffect$.
\end{itemize}
We note that every convex effect algebra $\effect$ has two trivial convex subalgebras: $\effect$ itself and $\{ \alpha \eid \, | \, \alpha \in [0,1] \}$.
The following result characterizes all convex subalgebras.

\begin{theorem}\label{thm:subalgebra}
Let $V$ be a vector space, $C$ a cone and $\effect=[0,u]_C$ a convex effect algebra generating $V$.
A subset $\reffect\subset\effect$ is a convex subalgebra if and only if $u\in\reffect$ and there exist $e_1,\ldots,e_n\in\effect$ such that
\begin{align}\label{eq:csa-rep}
\reffect & =\mathrm{span}_\real \{ e_1,\ldots,e_n\} \cap \effect  \nonumber \\
& =\{e\in\effect : e=\sum_i r_i e_i , \, r_i\in\real \} \, .
\end{align}
\end{theorem}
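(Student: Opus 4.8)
The plan is to prove both implications, treating the reverse (``only if'') direction as the substantive one.

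For the ``if'' direction, suppose $\reffect = \mathrm{span}_\real\{e_1,\dots,e_n\}\cap\effect$ with $u\in\reffect$, and write $L=\mathrm{span}_\real\{e_1,\dots,e_n\}$. I would simply verify (SA1)--(SA4), reading the abstract operations in the concrete effect algebra $[0,u]_C$ as $\eid=u$, $e'=u-e$ and $e\oplus f=e+f$. Condition (SA1) is the hypothesis $u\in\reffect$. For (SA2), if $e\in\reffect$ then $e,u\in L$, so $u-e\in L$, while $u-e\in\effect$; hence $u-e\in L\cap\effect=\reffect$. For (SA3), if $e,f\in\reffect$ with $e+f\le u$, then $e+f\in L$ and $e+f\in\effect$, so $e+f\in\reffect$. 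For (SA4), $\alpha e\in L$ and $\alpha e\in\effect$ for $\alpha\in[0,1]$. Since $L$ is a linear subspace these are all immediate.

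For the ``only if'' direction, let $\reffect$ be a convex subalgebra and put $L=\mathrm{span}_\real(\reffect)$. Because $V$ (hence $L$) is finite dimensional and $\reffect$ spans $L$, I can extract a basis $e_1,\dots,e_n$ of $L$ with every $e_i\in\reffect\subset\effect$; then $\mathrm{span}_\real\{e_1,\dots,e_n\}=L$. The inclusion $\reffect\subseteq L\cap\effect$ is trivial and $u\in\reffect$ is (SA1), so the theorem reduces to the reverse inclusion $L\cap\effect\subseteq\reffect$. Two facts will be used repeatedly: $0=0\cdot u\in\reffect$ by (SA4); and $\reffect$ is convex in $V$, since for $e,f\in\reffect$ and $t\in[0,1]$ the elements $te,(1-t)f$ lie in $\reffect$ by (SA4), they satisfy $te+(1-t)f\le u$, and so $te\oplus(1-t)f=te+(1-t)f\in\reffect$ by (SA3).

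The crux is the reverse inclusion. Fix $e\in L\cap\effect$, so $0\le e\le u$, and write $e=\sum_i r_i e_i$ with $r_i\in\real$. Choosing an integer $m\ge\max\{1,\sum_i|r_i|\}$, I split $\tfrac1m e = P-Q$ with $P=\sum_{r_i>0}\tfrac{r_i}{m}e_i$ and $Q=\sum_{r_i<0}\tfrac{|r_i|}{m}e_i$; the coefficients in each sum total at most $1$, so $P$ and $Q$ are convex combinations of the $e_i$ together with $0$, whence $P,Q\in\reffect$ by convexity. To obtain $\tfrac1m e\in\reffect$ I would use a complement trick: from $\tfrac1m e=P-Q\ge0$ we get $Q\le P$, so $Q+(u-P)=u-\tfrac1m e\le u$ is a defined partial sum of the $\reffect$-elements $Q$ and $u-P$ (the latter by (SA2)); hence $u-\tfrac1m e\in\reffect$ by (SA3), and then $\tfrac1m e\in\reffect$ by (SA2) again.

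The remaining and genuinely delicate point is to ``scale back up'' from $\tfrac1m e$ to $e$, since (SA4) only permits scaling down; this I expect to be the main obstacle. Here I would invoke (SA3) iteratively: for each $k\in\{1,\dots,m\}$ one has $k\cdot\tfrac1m e=\tfrac{k}{m}e\le e\le u$, because $\tfrac km\le1$ and $e\ge0$, so the partial sums $\tfrac1m e\oplus\cdots\oplus\tfrac1m e$ remain defined at every stage and stay in $\reffect$. After $m$ steps this yields $e=\underbrace{\tfrac1m e\oplus\cdots\oplus\tfrac1m e}_{m}\in\reffect$. This establishes $L\cap\effect\subseteq\reffect$, hence $\reffect=L\cap\effect=\mathrm{span}_\real\{e_1,\dots,e_n\}\cap\effect$, completing the proof; everything outside the complement trick and this scaling-up step is routine bookkeeping.
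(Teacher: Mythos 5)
Your proof is correct, and in the crucial direction it is substantially more complete than the paper's own argument. The ``if'' direction is the same routine verification of (SA1)--(SA4) that the paper gives. For the ``only if'' direction, however, the paper is extremely terse: it picks a linear basis $v_1,\ldots,v_m$ (of $V$, as printed), writes each $v_i$ as a difference $c_i^+v_i^+-c_i^-v_i^-$ of elements of $[0,u]_C$, sets $e_i=v_i^+$, $e_{m+i}=v_i^-$, and then simply asserts that \eqref{eq:csa-rep} holds. Read literally this cannot work, since a basis of all of $V$ forces $\mathrm{span}_\real\{e_1,\ldots,e_{2m}\}\cap\effect=\effect$; what is evidently intended is a basis of $\mathrm{span}_\real(\reffect)$ decomposed within $\reffect$, i.e.\ exactly your choice of $e_1,\ldots,e_n$. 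Even under that reading, the paper leaves the genuinely nontrivial containment $\mathrm{span}_\real(\reffect)\cap\effect\subseteq\reffect$ without argument, and this is precisely what your proof supplies: the complement trick (observing $Q\le P$, so that $u-\tfrac1m e=Q\oplus(u-P)$ is a defined sum of elements of $\reffect$, then applying (SA2)) gets around the fact that $P-Q$ is not itself an $\oplus$-expression, and the iterated use of (SA3) with the bound $\tfrac km e\le e\le u$ solves the scaling-up problem that (SA4) alone cannot, since it only permits scaling down. So your write-up should be viewed as filling in the missing core of the paper's proof rather than reproducing it. The only mild caveat is that, like the paper, you use finite-dimensionality of $V$ to extract a finite basis of the span from the spanning set $\reffect$; this is implicit in the paper's setting of linear effect algebras, but it is worth stating as a hypothesis since the theorem as quoted says only ``vector space.''
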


\begin{proof}
Let us assume that $\reffect$ is a subset given in
\eqref{eq:csa-rep} by some elements $e_1,\ldots,e_n\in\effect$,
and $u\in\reffect$. It follows that $u=\sum_i \bar{r}_i e_i$ for
some $\bar{r}_i\in\real$. Using this fact, we see that (SA2) is
valid: if $e\in\reffect$ and hence $e=\sum_i r_i e_i$ for some
$r_i\in\real$, then $e'=\sum_i (\bar{r}_i -r_i) e_i \in \reffect$.
It is clear from \eqref{eq:csa-rep} that also (SA3) and (SA4) are
valid.

Let us then assume that $\reffect$ is a convex subalgebra of
$[0,u]_C$. Let $v_1,\dots,v_m$ be a linear basis in $V$. Since
$[0,u]_C$ generates $V$, every $v_i$ can be written as $v_i=c_i^+
v_i^+ - c_i^- v_i^-$ for some $v_i^+,v_i^-\in[0,u]_C$ and
$c_i^+,c_i^- \geq 0$. We denote $e_i=v_i^+$ and $e_{m+i}=v_i^-$
for $i=1,\ldots,m$. Since $\{v_i\}_{i=1}^m$ is a basis,
\eqref{eq:csa-rep} holds.
\end{proof}

Using the same premises, we can also rephrase Theorem \ref{thm:subalgebra} as follows:
A subset $\reffect\subset\effect$ is a convex subalgebra if and only if $\reffect = U \cap \effect$ for some linear subspace $U \subset V$ such that $u \in U$. Thus, convex subalgebras are always determined by some linear subspace that contains the unit effect.
The smallest nontrivial convex subalgebras are generated by $u$ and some other effect $e$.

%%%%%%%%%%%%%
\subsection{Subalgebras and restrictions}
%%%%%%%%%%%%%

We are now ready to explain the connection between convex effect algebras and operational restrictions.
In the following $\effect(\state)$ is the set of all effects on a state space $\state$.
The following statement follows from Theorem \ref{thm:R1}.
Here we give a short direct proof as a consequence of Theorem \ref{thm:subalgebra}.

\begin{proposition} \label{proposition-ces-sim}
Let $\reffect$ be a convex subalgebra of $\effect(\state)$. Then
$\meter_{\reffect}$ is simulation closed.
\end{proposition}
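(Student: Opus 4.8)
The plan is to sidestep the general ordering argument of Theorem~\ref{thm:R1} and instead exploit the linear structure supplied by Theorem~\ref{thm:subalgebra}. By that theorem (in the reformulation given right after its proof), a convex subalgebra $\reffect$ of $\effect(\state)$ can be written as $\reffect = U \cap \effect(\state)$ for some linear subspace $U \subset V^*$ with $u \in U$. The whole point is that $U$ is closed under \emph{arbitrary} real linear combinations, which is considerably stronger than the mere convexity of $\reffect$, and this is exactly what makes the verification of (SC) immediate.

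First I would fix an arbitrary $\A \in \simu{\meter_{\reffect}}$ and, exactly as in the proof of Theorem~\ref{thm:R1}, write $\A_y = \sum_i \sum_{x \in \Omega_\B} p_i \nu^{(i)}_{xy} \B^{(i)}_x$ with simulators $\B^{(i)} \in \meter_{\reffect}$. To conclude $\A \in \meter_{\reffect}$ I must check $\ran(\A) \subset \reffect$, so I take an arbitrary $\tilde{\Omega} \subseteq \Omega_\A$ and form $\sum_{y \in \tilde{\Omega}} \A_y = \sum_i p_i \sum_{x \in \Omega_\B} \tilde{\nu}^{(i)}_x \B^{(i)}_x$, where $\tilde{\nu}^{(i)}_x = \sum_{y \in \tilde{\Omega}} \nu^{(i)}_{xy}$. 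The key step is then that each simulator effect satisfies $\B^{(i)}_x \in \ran(\B^{(i)}) \subset \reffect \subset U$, so the real linear combination $\sum_i p_i \sum_x \tilde{\nu}^{(i)}_x \B^{(i)}_x$ again lies in the subspace $U$. On the other hand, this element is manifestly an effect, being a partial sum of the effects of the meter $\A$, hence it lies in $U \cap \effect(\state) = \reffect$. Since $\tilde{\Omega}$ was arbitrary this gives $\ran(\A) \subset \reffect$, i.e.\ $\A \in \meter_{\reffect}$; together with (SIM1) this yields $\simu{\meter_{\reffect}} = \meter_{\reffect}$.

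I do not expect a genuine obstacle here, and that is precisely the content of calling this a short direct proof. Whereas the bare convexity used in Theorem~\ref{thm:R1} forced the telescoping rearrangement of Eq.~\eqref{eq:telescope-sum}, together with padding by the zero effect, in order to express $\sum_x \tilde{\nu}^{(i)}_x \B^{(i)}_x$ as an honest convex combination, the subspace property of $U$ makes all of that unnecessary: closure under unrestricted linear combinations absorbs the weights $\tilde{\nu}^{(i)}_x$ and $p_i$ directly. The only bookkeeping to watch is that the resulting vector is simultaneously an element of $U$ and an element of $\effect(\state)$, and the latter is automatic from its being a partial sum of a meter's effects.
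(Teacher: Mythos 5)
Your proof is correct and follows essentially the same route as the paper's: both invoke Theorem~\ref{thm:subalgebra} (equivalently, the reformulation $\reffect = U \cap \effect(\state)$ for a subspace $U \ni u$) so that the real linear combinations defining a simulated meter's effects automatically land in $U$, and membership in $\reffect$ then follows from their being effects. If anything, your version is slightly more careful than the paper's, since you explicitly verify that all partial sums in $\ran(\A)$ (not just the individual effects $\A_x$) lie in $\reffect$, which is what the definition of $\meter_{\reffect}$ requires.
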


\begin{proof}
We need to show that $\simu{\meter_{\reffect}}\subseteq\meter_{\reffect}$.
Let $\A\in\simu{\meter_{\reffect}}$.
Then
\begin{align*}
\A_x = \sum_i p_i \sum_y \nu^{(i)}_{yx} \B^{(i)}_y,
\end{align*}
where $\B^{(i)}\in\meter_{\reffect}$ and hence
$\B^{(i)}_y\in\reffect$. Since $\reffect$ is a convex subalgebra,
by Theorem \ref{thm:subalgebra} it has representation
\eqref{eq:csa-rep} for some $e_1,\ldots,e_m\in\effect(\state)$. It
follows that $\A_x \in\reffect$, and therefore
$\A\in\meter_{\reffect}$.
\end{proof}

In the following we demonstrate with two propositions that there
are restrictions of the type (R1) where the restricted set of
effects does not form a subalgebra.

Let $\A$ be a meter and denote $\rmeter=\simu{\A}$.
The set $\rmeter$ is simulation closed as $\simu{\simu{\A}}=\simu{\A}$.
The restricted set of effects $\effect_{\rmeter}$ is given as
\begin{align}
\effect_{\rmeter}=\{ e \in \effect: e = \sum_x r_x \A_x \, , r_x \in [0,1] \} \, . \label{eq:simA}
\end{align}
The set $\effect_{\rmeter}$ satisfies the conditions (SA1), (SA2),
and (SA4). However, the condition (SA3) is satisfied only for
specific choices of $\A$; this is the content of the second part
of the following proposition.

\begin{proposition}
Let $\A$ be a meter such that $\{\A_1,\ldots,\A_n\}$ is linearly independent and let $\rmeter=\simu{\A}$.
Then
\begin{itemize}
\item[(a)] $\rmeter = \meter_{\effect_{\rmeter}}$, hence $\rmeter$
is a restriction of type (R1). \item[(b)] $\effect_{\rmeter}$ is a
convex subalgebra of $\effect(\state)$ if and only if
$\lmax(\A_x)=1$ for every $x$.
\end{itemize}
\end{proposition}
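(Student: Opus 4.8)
The plan is to prove the two statements separately, using the explicit description of $\effect_{\rmeter}$ in Eq.~\eqref{eq:simA} together with the linear independence of $\{\A_1,\ldots,\A_n\}$.

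For part (a), the plan is to establish the two inclusions separately. The inclusion $\rmeter\subseteq\meter_{\effect_{\rmeter}}$ is immediate and in fact holds for any simulation closed set: since $\effect_{\rmeter}=\bigcup_{\B\in\rmeter}\ran(\B)$ by definition, every $\B\in\rmeter$ satisfies $\ran(\B)\subseteq\effect_{\rmeter}$, i.e.\ $\B\in\meter_{\effect_{\rmeter}}$. The substance lies in the reverse inclusion. Given $\B\in\meter_{\effect_{\rmeter}}$, each effect $\B_y\in\ran(\B)$ lies in $\effect_{\rmeter}$, so by \eqref{eq:simA} I can write $\B_y=\sum_x r_{xy}\A_x$ with $r_{xy}\in[0,1]$. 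I would then feed the normalization $\sum_y\B_y=u=\sum_x\A_x$ into this expansion to obtain $\sum_x\big(\sum_y r_{xy}-1\big)\A_x=0$, and invoke the linear independence of $\{\A_1,\ldots,\A_n\}$ to conclude $\sum_y r_{xy}=1$ for every $x$. This is exactly the assertion that the matrix $(r_{xy})$ is a stochastic post-processing, whence $\B=\nu\circ\A\in\simu{\A}=\rmeter$ with $\nu_{xy}=r_{xy}$. Turning the coefficient array into a genuine post-processing is the crux of (a), and it is precisely where linear independence is indispensable.

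For part (b), I would first recall that $\effect_{\rmeter}$ always satisfies (SA1), (SA2) and (SA4), so the whole question collapses to the validity of (SA3): whether $e+f\in\effect_{\rmeter}$ whenever $e,f\in\effect_{\rmeter}$ satisfy $e+f\leq u$. Writing $e=\sum_x r_x\A_x$ and $f=\sum_x s_x\A_x$, linear independence makes these expansions unique, so $\sum_x(r_x+s_x)\A_x$ is the only expansion of $e+f$; hence $e+f\in\effect_{\rmeter}$ if and only if $r_x+s_x\leq 1$ for all $x$. Thus (SA3) holds exactly when the constraint $e+f\leq u$ already forces $r_x+s_x\leq 1$ for every $x$, and the task becomes to show that this happens precisely when $\lmax(\A_x)=1$ for all $x$ (recall $\lmax(\A_x)\leq 1$ always, and linear independence rules out $\A_x=0$, so $\lmax(\A_x)>0$).

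For the forward direction, assuming $\lmax(\A_x)=1$ for each $x$, I would rewrite $e+f\leq u$ as $\sum_x(1-r_x-s_x)\A_x\geq 0$ and suppose some coefficient $1-r_{x_0}-s_{x_0}$ were negative. Evaluating on a state $t^\ast$ where $\A_{x_0}(t^\ast)=\lmax(\A_{x_0})=1$ forces $\A_x(t^\ast)=0$ for $x\neq x_0$ by normalization, so the functional takes the value $1-r_{x_0}-s_{x_0}<0$ there, contradicting positivity; hence $r_x+s_x\leq 1$ for all $x$ and (SA3) holds. For the converse I would argue by contraposition: if $\lmax(\A_{x_0})=M<1$ for some $x_0$, I construct an explicit violation of (SA3) by taking $e=f=r_{x_0}\A_{x_0}$ with the single nonzero coefficient $r_{x_0}$ chosen in the interval $\big(\tfrac{1}{2},\min\{1,\tfrac{1}{2M}\}\big]$, which is nonempty exactly because $M<1$. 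The bound $r_{x_0}\leq\tfrac{1}{2M}$ guarantees $2r_{x_0}\A_{x_0}\leq u$, so $e\oplus f$ is defined, whereas $r_{x_0}>\tfrac{1}{2}$ makes the unique $x_0$-coefficient of $e+f$ equal to $2r_{x_0}>1$, so $e+f\notin\effect_{\rmeter}$. Calibrating this scaling factor so that $2r_{x_0}\A_{x_0}\leq u$ and $2r_{x_0}>1$ hold simultaneously is the delicate point of the whole proposition, and it is exactly where the strict inequality $M<1$ enters.
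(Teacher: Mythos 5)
Your proposal is correct and follows essentially the same route as the paper's own proof: part (a) via the normalization-plus-linear-independence argument that turns the coefficient array into a stochastic post-processing, and part (b) by reducing to (SA3), evaluating at states where $\A_x$ attains the value $1$ for the forward direction, and scaling a single deficient effect (your symmetric choice $e=f=r_{x_0}\A_{x_0}$ versus the paper's $e=\A_1$, $f=\mu\A_1$) to violate (SA3) for the converse. The differences are only cosmetic parametrizations of the same counterexample.
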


\begin{proof}
\begin{itemize}
\item[(a)] From the definitions of $\effect_{\rmeter}$ and
$\meter_{\effect_{\rmeter}}$ it follows that $\rmeter \subseteq
\meter_{\effect_{\rmeter}}$ for any $\rmeter$. For the other
direction, let us take $\B \in \rmeter_{\effect_{\rmeter}}$, where
$\effect_{\rmeter}$ is given by Eq. \eqref{eq:simA}. Thus, for
each $y \in \Omega_\B$ there exist $\{r^{(y)}_x\}_{x \in
\Omega_\A} \subset [0,1]$ such that $\B_y = \sum_{x} r^{(y)}_x
\A_x$. From the normalization of $\A$ and $\B$ we see that
\begin{equation}
\sum_{x \in \Omega_\A} \A_x = u = \sum_{y \in \Omega_\B} \B_y = \sum_{x \in \Omega_\A}  \left( \sum_{y \in \Omega_\B} r^{(y)}_x \right) \A_x
\end{equation}
so that from the linear independence of the effects of $\A$ it follows that $\sum_y r^{(y)}_x = 1$ for all $x \in \Omega_\A$. Thus, we can define a post-processing $\nu: \Omega_\A \to \Omega_\B$ by setting $\nu_{xy} = r^{(y)}_x$ for all $x \in \Omega_\A$ and $y \in \Omega_\B$ so that $\B = \nu \circ \A \in \simu{\A} = \rmeter$. Hence, also $\meter_{\effect_{\rmeter}} \subseteq \rmeter$ holds in this case.

\item[(b)] Let us assume that $\lmax(\A_x)=1$ for every $x$.
Suppose that $e,f\in\effect_{\rmeter}$ and $e+f\leq u$.
We have $e=\sum_x \alpha_x \A_x$ and $f=\sum_x \beta_x \A_x$, and thus $e+f=\sum_x (\alpha_x + \beta_x) \A_x$.
For every $x$, fix $s_x\in\state$ such that $\A_x(s_x)=1$.
Then
\begin{equation*}
1 \geq (e+f)(s_x) = \sum_y (\alpha_y + \beta_y) \A_y(s_x) = \alpha_x + \beta_x \, .
\end{equation*}
Therefore, $e+f\in\effect_{\rmeter}$.

Let us then assume that $1>\lmax(\A_1)\equiv\lambda$. Then
$\tfrac{1}{\lambda} \A_1\in\effect(\state)$ and
$\tfrac{1}{\lambda}>1$. Let $0<\mu<1$ and
$\mu<\tfrac{1-\lambda}{\lambda}$. Then
$\mu\A_1\in\effect_{\rmeter}$ and $\A_1 + \mu \A_1 \leq
\tfrac{1}{\lambda} \A_1$, thus $\A_1 + \mu \A_1 \in
\effect(\state)$. But $\A_1+\mu\A_1\notin\effect_{\meter}$ because
otherwise we would have
\begin{equation*}
(1+\mu)\A_1 = \sum_x r_x \A_x
\end{equation*}
and by linear independence $r_1 = 1+\mu>1$, which is a
contradiction.
\end{itemize}
\end{proof}

\begin{proposition} \label{prop:restricting-isomorphism}
If $\reffect \subsetneq \effect(\state)$ is an effect restriction that satisfies (E1) and (E2) such that there exists an affine bijection $\Phi: \effect(\state) \to \reffect$, then $\reffect$ is not a convex subalgebra of $\effect(\state)$ but it nevertheless gives a restriction of the type (R1).
\end{proposition}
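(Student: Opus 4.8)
The plan is to prove the two assertions separately, leaning on Theorem~\ref{thm:R1} and Theorem~\ref{thm:subalgebra}. The claim that $\reffect$ still yields an (R1) restriction is the easy half. Since $\reffect$ satisfies (E1) and (E2), Theorem~\ref{thm:R1} tells us that $\meter_{\reffect}$ is simulation closed exactly when $\reffect$ is convex, so it suffices to note that $\reffect$ is convex. But $\effect(\state)$ is convex, $\Phi$ is affine, and $\reffect=\Phi(\effect(\state))$; the affine image of a convex set is convex, hence $\reffect$ is convex and $\meter_{\reffect}$ is simulation closed. Taking $\rmeter=\meter_{\reffect}$ and using $\reffect\subsetneq\effect(\state)$ to see that this is a proper choice, we obtain a restriction of type (R1).

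For the first assertion I would argue by a dimension count, the key fact being that an affine bijection preserves affine dimension. First I would record that $\aff{\effect(\state)}=V^*$: the zero effect $o$ lies in $\effect(\state)$, and $\effect(\state)$ linearly spans $V^*$ because $V^*_+$ is generating, so the affine hull of $\effect(\state)$ coincides with its linear span $V^*$, which has dimension $d+1$. Next I would transport this to $\reffect$. The inverse of an affine bijection between convex sets is again affine (apply $\Phi$ to a convex combination), and an affine bijection cannot destroy affine independence: if the $\Phi$-images of points $x_0,\dots,x_{d+1}\in\effect(\state)$ satisfied an affine relation, applying the affine map $\Phi^{-1}$ would produce the same relation among the $x_i$. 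Thus the $\Phi$-images of $d+2$ affinely independent points of $\effect(\state)$ are $d+2$ affinely independent points of $\reffect$, and since $\reffect\subset V^*$ this forces $\aff{\reffect}=V^*$ as well.

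Finally I would reach a contradiction with the subalgebra hypothesis through Theorem~\ref{thm:subalgebra}. In the GPT setting the ambient space of that theorem is $V^*$ and the cone is $V^*_+$, so its rephrased form says that a convex subalgebra $\reffect$ must equal $U\cap\effect(\state)$ for some linear subspace $U\subseteq V^*$ with $u\in U$; in particular $\aff{\reffect}\subseteq U$. Combined with $\aff{\reffect}=V^*$ this gives $U=V^*$ and hence $\reffect=\effect(\state)$, contradicting $\reffect\subsetneq\effect(\state)$. Therefore $\reffect$ cannot be a convex subalgebra.

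The step demanding the most care is the preservation of affine dimension under $\Phi$, i.e.\ checking that $\Phi^{-1}$ is affine and that affine bijections map affinely independent tuples to affinely independent tuples; once this elementary lemma is in place, the rest is a direct application of Theorems~\ref{thm:R1} and \ref{thm:subalgebra} together with the standard fact that affine images of convex sets are convex.
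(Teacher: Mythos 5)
Your proof is correct and follows essentially the same route as the paper's: convexity of $\reffect$ (as the affine image of the convex set $\effect(\state)$) together with Theorem~\ref{thm:R1} gives the (R1) claim, and a dimension count combined with the rephrased Theorem~\ref{thm:subalgebra} (so that $\reffect = U \cap \effect(\state)$ forces $U = V^*$ and hence $\reffect = \effect(\state)$) gives the contradiction with the subalgebra hypothesis. The only difference is presentational: the paper asserts $\dim\left(\mathrm{span}\left(\reffect\right)\right) = \dim\left(\mathrm{span}\left(\effect(\state)\right)\right) = \dim(V)$ directly ``because of the bijectivity,'' whereas you make this step explicit by checking that $\Phi^{-1}$ is affine and that affine bijections preserve affine independence.
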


\begin{proof}
Since $\reffect = \Phi(\effect(\state))$, $\effect(\state)$ is
convex, and $\Phi$ is convexity preserving, we have that
$\reffect$ is convex. By Theorem \ref{thm:R1}, we have that
$\meter_{\reffect}$ is simulation closed and thus gives a
restriction of type (R1).

To see that $\reffect$ is not a convex subalgebra of $\effect(\state)$, we note that $\dim\left( \mathrm{span} \left(\reffect\right)\right) = \dim\left( \mathrm{span} \left(\effect(\state)\right)\right) = \dim(V)$ because of the bijectivity. However, from Theorem \ref{thm:subalgebra} we see that the only generating convex subalgebra must be the effect algebra $\effect(\state)$ itself: namely, if $\reffect = U \cap \effect(\state)$ for some subspace $U$, then from the previous equality of the dimensions it follows that also $\dim(U)=\dim(V)$ and this can only be the case when $U=V$ so that $\reffect= V \cap \effect(\state) = \effect(\state)$. Since we have that $\reffect$ is a proper subset of the effect algebra we have arrived at a contradiction and $\reffect$ cannot be a convex subalgebra.
\end{proof}

%%%%%%%%%%%%%
\section{Restriction class (R2) and effectively $n$-tomic meters}\label{sec:ntomic}
%%%%%%%%%%%%%

For every integer $n\geq 2$, we use the notation $\meter_{n-\mathrm{eff}} = \simu{\meter_n}$, where $\meter_n$ is the set of all meters that have $n$ or less outcomes. We call $\meter_{n-\mathrm{eff}}$ the set of effectively $n$-tomic meters because they can be reduced to meters with $n$ or less outcomes.
The foundational interest to investigate and test these type of restrictions has been discussed in \cite{KlCa16,KlVeCa17,Huetal18}.

It is clear that $\meter_{n-\mathrm{eff}} \subseteq \meter_{n+1-\mathrm{eff}}$.
The set $\meter_{2-\mathrm{eff}}$ contains all dichotomic meters, therefore $\effect_{\rmeter}=\effect$ for any choice $\rmeter = \meter_{n-\mathrm{eff}}$.
We conclude that these restrictions are of the type (R2).
The restriction to effectively dichotomic meters $\meter_{2-\mathrm{eff}}$ is the smallest restriction of the type (R2) in the following sense, and motivates to look this restriction in more details.

\begin{proposition} \label{proposition-2-eff-subset-R2}
Let $\rmeter\subset\meter$ be an operational restriction of the
type (R2). Then $\meter_{2-\mathrm{eff}} \subseteq \rmeter$.
\end{proposition}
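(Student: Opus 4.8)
The plan is to reduce the claim to the inclusion $\meter_2\subseteq\rmeter$ and then let simulation closedness do the rest. By definition $\meter_{2-\mathrm{eff}}=\simu{\meter_2}$, so once I show $\meter_2\subseteq\rmeter$, property (SIM3) gives $\simu{\meter_2}\subseteq\simu{\rmeter}$, and (SC) together with (SIM2) gives $\simu{\rmeter}=\rmeter$; combining these yields $\meter_{2-\mathrm{eff}}=\simu{\meter_2}\subseteq\rmeter$, which is the assertion. Thus the whole work is in establishing $\meter_2\subseteq\rmeter$.

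To prove $\meter_2\subseteq\rmeter$, I would take an arbitrary meter in $\meter_2$. After discarding zero outcomes it is either the one-outcome trivial meter $\{u\}$, which is automatically contained in $\rmeter$ by simulation closedness, or a dichotomic meter of the form $\{e,u-e\}$ for some effect $e\in\effect$ (note $o\leq e\leq u$ forces $u-e\in\effect$ and $e+(u-e)=u$, so this is a legitimate meter). Hence it suffices to realize every dichotomic meter $\{e,u-e\}$ inside $\rmeter$. This is exactly where the defining feature of a type (R2) restriction is used: since $\effect_{\rmeter}=\effect$, the effect $e$ lies in $\ran(\A)$ for some $\A\in\rmeter$, i.e.\ $e=\sum_{y\in\tilde{\Omega}}\A_y$ for some $\tilde{\Omega}\subseteq\Omega_\A$.

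The key step is then to recover $\{e,u-e\}$ by post-processing $\A$. I would define the deterministic post-processing $\nu\colon\Omega_\A\to\{1,2\}$ that merges all outcomes of $\tilde{\Omega}$ into the outcome $1$ and all remaining outcomes into the outcome $2$; explicitly $\nu_{y1}=1$ and $\nu_{y2}=0$ for $y\in\tilde{\Omega}$, while $\nu_{y1}=0$ and $\nu_{y2}=1$ for $y\in\Omega_\A\setminus\tilde{\Omega}$, so that $\sum_{z}\nu_{yz}=1$ for every $y$. Then
\begin{equation*}
(\nu\circ\A)_1=\sum_{y\in\Omega_\A}\nu_{y1}\A_y=\sum_{y\in\tilde{\Omega}}\A_y=e,\qquad (\nu\circ\A)_2=\sum_{y\in\Omega_\A}\nu_{y2}\A_y=u-e,
\end{equation*}
so $\nu\circ\A=\{e,u-e\}$. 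Because $\rmeter$ is simulation closed, it is in particular closed under post-processing, whence $\nu\circ\A\in\rmeter$. This gives the desired dichotomic meter and completes $\meter_2\subseteq\rmeter$.

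I do not expect a genuine obstacle here: the entire content is the observation that any effect occurring in the range of some accessible meter can be isolated into a two-outcome meter by merging outcomes, which is precisely a (deterministic) post-processing. The only points requiring minor care are checking that $\{e,u-e\}$ is a valid meter and that the degenerate one-outcome case is subsumed by the trivial meters already guaranteed by (SC); both are routine.
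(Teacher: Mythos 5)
Your proof is correct and follows essentially the same route as the paper's: the paper's own (much terser) argument simply asserts that $\effect_{\rmeter}=\effect$ puts all dichotomic meters in $\rmeter$ and then invokes simulation closedness, which is exactly what you prove. Your write-up merely makes explicit the merging post-processing and the (SIM3)/(SC) bookkeeping that the paper leaves implicit.
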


\begin{proof}
Since $\effect_{\rmeter}=\effect$, all dichotomic meters are in $\rmeter$.
As $\rmeter$ is simulation closed, it follows that all effectively dichotomic meters are in $\rmeter$.
\end{proof}

Depending on the theory, it can happen that $\meter_{2-\mathrm{eff}} = \meter$ \cite{FiHeLe18}.
The specific nature of these type of restrictions is hence different in different theories.
There are, however, some general properties of $\meter_{2-\mathrm{eff}}$ and $\meter_{n-\mathrm{eff}}$ that are theory independent; in the following we demonstrate some of these features.

All of the following results are related to the minimal and maximal values $\lmin(\A_x)$ and $\lmax(\A_x)$ of the effects of a meter $\A$.
We start by making some simple observations.
For a dichotomic meter we have $\A_1(s)+\A_2(s)=1$ for all states $s$, and hence
\begin{align}\label{eq:maxmin}
\lmax(\A_2)=1-\lmin(\A_1) \, .
\end{align}
It follows that
\begin{align}\label{eq:maxmax}
\lmax(\A_1)+\lmax(\A_2) \geq 1 \, .
\end{align}
Further, equality holds in \eqref{eq:maxmax} if and only if $\A$
is a trivial meter, i.e., if $\A$ is of the form $\A_1 = p u$ and
$\A_2 = (1-p ) u$ for some $p\in [0,1]$. Clearly, if $\A$ is
trivial meter, then $\lmax(\A_1) + \lmax(\A_2) = p +(1-p) = 1$. On
the other hand, if $\lmax(\A_1) + \lmax(\A_2)=1$, then if we
denote by $s_1$ and $s_2$ the states maximizing $\A_1(s)$ and
$\A_2(s)$ respectively, we see that
\begin{equation*}
1 = \A_1(s) + \A_2(s) \leq \A_1(s_1) + \A_2(s) \leq \A_1(s_1) + \A_2(s_2) =1
\end{equation*}
for all $s \in \state$ so that all the inequalities must actually be equalities and particularly from the first inequality we get that $\A_1(s) = \A_1(s_1)=:q$ for all $s \in \state$. Similarly then $\A_2(s) = \A_2(s_2)=1-q$ for all $s \in \state$. Hence, $\A_1 = q u$ and $\A_2 = (1-q)u$ so that $\A$ is trivial.

\begin{proposition}\label{prop:lambda1}
Any effectively dichotomic meter $\A$ can be simulated from dichotomic meters $\B$ that satisfy
$\lmax(\B_1)=\lmax(\B_2)=1$.
\end{proposition}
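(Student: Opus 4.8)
The plan is to collapse the whole statement to a single concrete simulation step by exploiting the closure-operator properties (SIM1)--(SIM3). Let $\mathcal{D}\subset\meter_2$ denote the set of dichotomic meters $\C$ with $\lmax(\C_1)=\lmax(\C_2)=1$; equivalently, by \eqref{eq:maxmin}, $\mathcal{D}$ consists of those dichotomic meters whose first effect satisfies $\lmin(\C_1)=0$ and $\lmax(\C_1)=1$. Since an effectively dichotomic meter $\A$ lies in $\simu{\meter_2}$, and since $\simu{\simu{\mathcal{D}}}=\simu{\mathcal{D}}$ by (SIM2) while $\simu{\cdot}$ is monotone by (SIM3), it suffices to prove the inclusion $\meter_2\subseteq\simu{\mathcal{D}}$: then $\A\in\simu{\meter_2}\subseteq\simu{\simu{\mathcal{D}}}=\simu{\mathcal{D}}$, which is exactly the assertion. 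So the proposition reduces to showing that an \emph{arbitrary} dichotomic meter can be simulated from meters in $\mathcal{D}$.

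The core of the argument is an explicit \emph{sharpening} construction. Fix a nontrivial dichotomic meter $\B$, write $a=\lmax(\B_1)$ and $b=\lmax(\B_2)$, and recall from \eqref{eq:maxmin} that $\lmin(\B_1)=1-b$, while nontriviality gives $a+b>1$ by the discussion following \eqref{eq:maxmax}. I would then introduce the effect
\begin{equation*}
e=\frac{\B_1-(1-b)u}{a+b-1},
\end{equation*}
check that it is a genuine effect with $\lmin(e)=0$ and $\lmax(e)=1$, and set $\C=(e,u-e)\in\mathcal{D}$. Rewriting $\B_1=(a+b-1)e+(1-b)u$ in terms of $\C_1=e$ and $\C_2=u-e$, one verifies that the stochastic matrix
\begin{equation*}
\nu=\begin{pmatrix} a & 1-a \\ 1-b & b \end{pmatrix}
\end{equation*}
produces $\B=\nu\circ\C$; the only thing to confirm is that its entries lie in $[0,1]$, which is immediate from $a,b\in[0,1]$. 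Hence $\B\in\simu{\{\C\}}\subseteq\simu{\mathcal{D}}$, and, notably, a single simulator suffices with no mixing required.

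It then remains only to dispatch the trivial dichotomic meters and to note that $\mathcal{D}\neq\emptyset$. A trivial meter $\B_1=pu$, $\B_2=(1-p)u$ arises from any $\C\in\mathcal{D}$ by the constant post-processing $\nu_{x1}=p$, $\nu_{x2}=1-p$, so it too lies in $\simu{\mathcal{D}}$; and under the no-restriction hypothesis $\mathcal{D}\neq\emptyset$ whenever $\dim(\aff{\state})\geq1$, since rescaling any non-constant affine functional yields an effect with minimal value $0$ and maximal value $1$ (if the state space is a point, every meter is trivial and the statement is vacuous). I do not expect a serious obstacle: the one place that genuinely needs care is confirming that $e$ is a legitimate effect, i.e. $o\leq e\leq u$ — this is exactly where \eqref{eq:maxmin}, giving $\B_1\geq(1-b)u$, is used — and that the substitution into $\nu\circ\C$ reproduces $\B$ on the nose rather than merely matching its extreme values. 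Once the identity $\B=\nu\circ\C$ is verified, the closure-operator reduction of the first paragraph finishes the proof.
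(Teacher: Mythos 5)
Your proof is correct and takes essentially the same approach as the paper: your sharpened effect $e=(\B_1-(1-b)u)/(a+b-1)$ is precisely the paper's auxiliary meter $\A'$ (obtained by subtracting off the trivial part of a non-trivial dichotomic meter), and your explicit stochastic matrix $\nu$ just writes out the post-processing that the paper asserts recovers $\B$ from $\A'$. The only differences are presentational — you make explicit the closure-operator reduction via (SIM2)--(SIM3), the handling of trivial meters, and the degenerate nonemptiness check, which the paper compresses into one or two lines.
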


\begin{proof}
It is enough to show that any dichotomic meter $\A$ can be post-processed from a dichotomic meter $\A'$ with $\lmax(\A'_1)=\lmax(\A'_2)=1$.
A trivial meter can be post-processed from any meter, so we can further assume that $\A$ is non-trivial.
We denote $\alpha=\lmax(\A_1) + \lmax(\A_2) -1>0$ and define
\begin{align*}
\A'_1 = \tfrac{1}{\alpha} \A_1 + \tfrac{\lmax(\A_2) - 1}{\alpha} u \, , \quad  \A'_2 = \tfrac{1}{\alpha} \A_2 + \tfrac{\lmax(\A_1) - 1}{\alpha} u \, .
\end{align*}
We have $\lmin(\A'_1)=\lmin(\A'_2)=0$ and $\A'_1+\A'_2=u$, hence $\A'$ is a meter.
Further, $\lmax(\A'_1)=\lmax(\A'_2)=1$.
Finally, $\A$ is a post-processing of $\A'$.
\end{proof}

An obvious question is: when a meter $\A\in\meter$ with $m>n$ outcomes is effectively $n$-tomic?
In the following we develop some criteria.

\begin{proposition}\label{prop:eigen-gen}
Let $\A$ be a meter.
\begin{enumerate}
\item[(a)] If there exists $y \in \Omega_\A$ such that $\sum_{x\neq y} \lmax(\A_x) \leq 1$, then $\A$ is effectively dichotomic.
\item[(b)] If $\sum_x \lmax(\A_x) > n$, then $\A$ is not effectively $n$-tomic.
\end{enumerate}
\end{proposition}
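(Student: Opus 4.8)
The plan is to treat the two parts separately: part (a) by exhibiting an explicit simulation of $\A$ from dichotomic meters, and part (b) by a monotonicity estimate for $\lmax$ under the simulation scheme.

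For part (a), I would relabel the outcomes so that the distinguished index is $y$ and set $\lambda_x := \lmax(\A_x)$ for $x \neq y$. Effects with $\lambda_x = 0$ vanish identically (since $\A_x \geq o$ and $\sup_s \A_x(s) = 0$ force $\A_x = o$) and can be discarded. For the remaining indices I define the normalized effect $\hat{\A}_x := \tfrac{1}{\lambda_x}\A_x$, which satisfies $o \leq \hat{\A}_x \leq u$ with $\lmax(\hat{\A}_x) = 1$, so that $\B^{(x)} := (\hat{\A}_x,\, u - \hat{\A}_x)$ is a dichotomic meter. I would then simulate $\A$ as follows: with probability $\lambda_x$ run $\B^{(x)}$ and deterministically post-process its first outcome to $x$ and its second outcome to $y$; and with the remaining probability $1 - \sum_{x\neq y}\lambda_x$, which is nonnegative precisely by the hypothesis $\sum_{x\neq y}\lmax(\A_x)\leq 1$, run the trivial (hence dichotomic) meter that deterministically outputs $y$. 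A direct check shows the simulated meter assigns effect $\lambda_x \hat{\A}_x = \A_x$ to each $x \neq y$, and to outcome $y$ the effect $\sum_{x\neq y}\lambda_x(u - \hat{\A}_x) + \big(1 - \sum_{x\neq y}\lambda_x\big)u = u - \sum_{x\neq y}\A_x = \A_y$, so $\A \in \simu{\meter_2}$.

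For part (b) I would prove the contrapositive: if $\A$ is effectively $n$-tomic then $\sum_x \lmax(\A_x) \leq n$. Writing $\A \in \simu{\meter_n}$ as $\A_y = \sum_i p_i \sum_x \nu^{(i)}_{xy}\B^{(i)}_x$ with each $\B^{(i)} \in \meter_n$, I would invoke the positive homogeneity $\lmax(\alpha e) = \alpha\lmax(e)$ for $\alpha \geq 0$ and the subadditivity $\lmax(e+f) \leq \lmax(e) + \lmax(f)$ to obtain
\[
\sum_y \lmax(\A_y) \leq \sum_i p_i \sum_x \lmax(\B^{(i)}_x)\sum_y \nu^{(i)}_{xy}.
\]
The stochasticity $\sum_y \nu^{(i)}_{xy} = 1$ collapses the innermost sum, and since each $\B^{(i)}$ has at most $n$ nonzero effects, each satisfying $\B^{(i)}_x \leq u$ and hence $\lmax(\B^{(i)}_x) \leq 1$, one gets $\sum_x \lmax(\B^{(i)}_x) \leq n$. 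Therefore $\sum_y \lmax(\A_y) \leq n \sum_i p_i = n$, which is the desired bound.

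The calculations in both parts are routine; the only genuine content lies in the construction in part (a). The step requiring care is the choice of mixing weights $\lambda_x = \lmax(\A_x)$, so that the dilution caused by mixing exactly cancels the normalization built into $\hat{\A}_x$, together with the observation that the hypothesis $\sum_{x\neq y}\lmax(\A_x)\leq 1$ is precisely what guarantees these weights, augmented by a trivial meter absorbing the slack $1 - \sum_{x\neq y}\lambda_x$, form a legitimate probability distribution. Once this is recognized, verifying that the simulation reproduces $\A$ outcome-by-outcome is immediate.
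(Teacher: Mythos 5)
Your proposal is correct, and it is worth comparing to the paper part by part. For part (b) your argument is essentially identical to the paper's: both prove the contrapositive by expanding $\A_y = \sum_i p_i \sum_x \nu^{(i)}_{xy}\B^{(i)}_x$, bounding the simulators' effects by $u$, and exchanging sums so that the stochasticity $\sum_y \nu^{(i)}_{xy}=1$, applied to the at most $n$ outcomes of each $\B^{(i)}$, yields the bound $n$; whether one first bounds $\B^{(i)}_x(s)\le 1$ pointwise (the paper) or first invokes subadditivity and homogeneity of $\lmax$ (you) is immaterial. The genuine difference is in part (a): the paper gives no proof at all, merely asserting that the claim is a direct generalization of Lemma 5 in the Supplemental Material of \cite{OsGuWiAc17}, where it is established for quantum POVMs. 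You instead supply an explicit, self-contained simulation valid in an arbitrary GPT: rescale each $\A_x$, $x\neq y$, to $\hat{\A}_x = \lmax(\A_x)^{-1}\A_x$, mix the dichotomic meters $\bigl(\hat{\A}_x,\, u-\hat{\A}_x\bigr)$ with weights $\lmax(\A_x)$, deterministically route each second outcome to $y$, and absorb the slack $1-\sum_{x\neq y}\lmax(\A_x)\geq 0$ --- which is exactly the hypothesis --- into a trivial meter outputting $y$, after which the outcome-$y$ effect telescopes to $u-\sum_{x\neq y}\A_x=\A_y$ by normalization. This is in substance the construction of the cited quantum lemma transplanted to the GPT setting, and it buys something the paper omits: a verification that the argument really does generalize. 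The one point needing care, which you handle correctly, is that $\hat{\A}_x$ is a legitimate effect, i.e.\ that $\lmax(e)\leq 1$ suffices for $e\leq u$ in the dual order; this holds because the positive cone $V_+$ is generated by the state space, so positivity of $u-e$ on states implies positivity on all of $V_+$.
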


\begin{proof}
\begin{enumerate}
\item[(a)] This is a direct generalization of Lemma 5 in the Supplemental Material for Ref. \cite{OsGuWiAc17}, where it was shown to hold for POVMs. 

\item[(b)] Let $\A$ be effectively $n$-tomic, i.e., there exist
$n$-outcome meters $\{ \B^{(i)}\}_i$, post-processings $\nu^{(i)}:
\{1, \ldots, n\} \to \Omega_\A$, and a probability distribution
$(p_i)_i$ such that $\A_x = \sum_i p_i \sum_j \nu^{(i)}_{jx}
\B^{(i)}_j$ for all $x \in \Omega_\A$. Now we see that
\begin{align*}
\sum_x \lmax(\A_x) &= \sum_{x} \max_{s \in \state} \A_x(s) \\
&= \sum_{i,x} p_i \max_{s \in \state} \left( \sum_j \nu^{(i)}_{jx} \B^{(i)}_j\right)(s) \\
&= \sum_{i,x} p_i \max_{s \in \state} \left( \sum_j \nu^{(i)}_{jx} \B^{(i)}_j(s)\right) \\
& \leq \sum_{i,x} p_i \sum_j \nu^{(i)}_{jx}  = \sum_i p_i \left[ \sum_j \left( \sum_x \nu^{(i)}_{jx} \right) \right] = n.
\end{align*}
\end{enumerate}
\end{proof}

The previous result already shows some tasks that may be possible
in general but not in a theory where the effective number of
outcomes is restricted. Namely, perfect discrimination of $n$
states requires that $\sum_x \lmax(\A_x) \geq n$. Hence, by Prop.
\ref{prop:eigen-gen}(b) an effectively $n$-tomic meter can
discriminate at most $n$ states.

As a consequence of Prop. \ref{prop:lambda1}, there exists a
dichotomic meter $\A$ with $\sum_x \lmax(\A_x) = 2$. Therefore,
the bound for effectively dichotomic meters in Prop.
\ref{prop:eigen-gen}(b) cannot be improved without additional
assumptions. The following statement has specific assumptions and
for that reason gives a tighter bound. In Example \ref{ex:ud}
below we show that this result has interesting implications.

\begin{proposition} \label{prop:not-2-eff}
Let $\A$ be an $n$-outcome meter such that $\A_1, \ldots, \A_m$
are  indecomposable effects for some $m\leq n$ and for all $i,j
\in \{1, \ldots, m \}$ such that $i\neq j$ we have
\begin{enumerate}
\item[i)]  $\A_j \neq t \A_i$ for all $t>0$,
\item[ii)] $t_i\A_i + t_j \A_j \neq u$ for all $t_i,t_j>0$.
\end{enumerate}
If $\sum_{k=1}^m \lmax(\A_k) >1$, then $\A$ is not effectively dichotomic.
\end{proposition}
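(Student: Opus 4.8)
The plan is to argue by contradiction. Suppose $\A$ is effectively dichotomic, i.e. $\A\in\meter_{2-\mathrm{eff}}=\simu{\meter_2}$, and write out the simulation explicitly: there are dichotomic simulators $\B^{(i)}$, a probability distribution $(p_i)_i$, and post-processings $\nu^{(i)}$ with $\A_x=\sum_i p_i[\nu^{(i)}_{1x}\B^{(i)}_1+\nu^{(i)}_{2x}\B^{(i)}_2]$. Writing $b_i:=\B^{(i)}_1$ so that $\B^{(i)}_2=u-b_i$, and setting $E^{(i)}_k:=\nu^{(i)}_{1k}b_i+\nu^{(i)}_{2k}(u-b_i)\in V^*_+$, this becomes $\A_k=\sum_i p_i E^{(i)}_k$ for every $k$. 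The whole goal is to show the hypotheses force $\sum_{k=1}^m\lmax(\A_k)\le 1$, contradicting the assumption.

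First I would exploit indecomposability. Since each $\A_k$ with $k\le m$ lies on an extreme ray of $V^*_+$ and $\A_k=\sum_i p_i E^{(i)}_k$ is a finite sum of elements of $V^*_+$, the extreme-ray property forces every nonzero summand to be a positive multiple of $\A_k$; in particular, whenever $E^{(i)}_k\neq o$ we obtain $\A_k\in\mathrm{cone}(b_i,u-b_i)$. Writing $\A_k=\mu b_i+\nu(u-b_i)$ with $\mu,\nu\ge 0$ and using again that $\A_k$ is extreme in $V^*_+$ while $b_i,u-b_i\in V^*_+$, one deduces that $\A_k\propto b_i$ or $\A_k\propto(u-b_i)$. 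This single conclusion also cleanly absorbs the trivial simulators ($b_i\in\{o,u\}$) and the degenerate possibility $\A_k\propto u$, so no separate case analysis is needed there.

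The combinatorial heart, which I expect to be the crux, is to bound for each fixed $i$ the number of indices $k\le m$ with $E^{(i)}_k\neq o$. By condition (i) the effects $\A_1,\dots,\A_m$ are pairwise non-proportional, so at most one such $k$ can satisfy $\A_k\propto b_i$ and at most one can satisfy $\A_k\propto(u-b_i)$. If both happened for distinct indices $k_1\neq k_2$, then $b_i=\beta\A_{k_1}$ and $u-b_i=\gamma\A_{k_2}$ with $\beta,\gamma>0$, whence $u=\beta\A_{k_1}+\gamma\A_{k_2}$ — exactly the situation forbidden by condition (ii). Hence each simulator $\B^{(i)}$ contributes to \emph{at most one} of $\A_1,\dots,\A_m$. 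This is precisely where both hypotheses enter: (ii) rules out one dichotomic meter feeding its two complementary effects into two different indecomposable effects, which is exactly the mechanism that would otherwise permit only the weaker bound $2$ of Proposition \ref{prop:eigen-gen}(b).

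Finally I would assemble the estimate. For fixed $i$, at most one term $\lmax(E^{(i)}_k)$ is nonzero, and from $E^{(i)}_k(s)=\nu^{(i)}_{2k}+(\nu^{(i)}_{1k}-\nu^{(i)}_{2k})b_i(s)$ with $b_i(s)\in[0,1]$ one checks $E^{(i)}_k(s)\le\max(\nu^{(i)}_{1k},\nu^{(i)}_{2k})\le 1$, so $\sum_{k=1}^m\lmax(E^{(i)}_k)\le 1$. Using subadditivity and positive homogeneity of $\lmax$,
\[
\sum_{k=1}^m\lmax(\A_k)=\sum_{k=1}^m\lmax\Big(\sum_i p_i E^{(i)}_k\Big)\le\sum_i p_i\sum_{k=1}^m\lmax(E^{(i)}_k)\le\sum_i p_i=1,
\]
contradicting $\sum_{k=1}^m\lmax(\A_k)>1$. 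Therefore $\A$ cannot be effectively dichotomic.
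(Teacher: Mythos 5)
Your proof is correct and follows the same overall strategy as the paper's: assume a dichotomic simulation, use indecomposability to force every nonzero simulator contribution to be proportional to the corresponding $\A_k$, use hypotheses i) and ii) to show that each simulator can feed at most one of $\A_1,\ldots,\A_m$, and then sum probabilities to reach $\sum_{k=1}^m \lmax(\A_k)\le 1$. The one genuine difference is the bookkeeping, and it buys you something. The paper tracks the two outcomes of each simulator separately via index sets $I^{(j)}_{\pm}=\{i \,:\, \nu^{(i)}_{\pm j}\neq 0\}$, bounds $\lmax(\A_j)$ by the corresponding sums of the $p_i$, and therefore must prove three separate disjointness statements, including $I^{(j)}_+\cap I^{(j)}_-=\emptyset$, i.e., that no single simulator feeds both of its complementary effects into the \emph{same} $\A_j$; the paper dispatches that case by noting $\A_j$ would then be proportional to $u$, which is only implicitly a contradiction (it relies on $u$ not being indecomposable in a nontrivial theory). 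In your version this case never needs to be excluded: by grouping each simulator's whole contribution as $E^{(i)}_k=\nu^{(i)}_{1k}b_i+\nu^{(i)}_{2k}(u-b_i)$ and observing that $E^{(i)}_k(s)$ is a convex combination of the two post-processing weights, you get $\lmax(E^{(i)}_k)\le 1$ no matter how the two outcomes are split, so the final subadditivity estimate counts each $p_i$ at most once automatically. Conditions i) and ii) enter both proofs in exactly the same way, ruling out, respectively, two of the $\A_k$ being proportional to the same simulator effect and two of them reconstituting $u$.
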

\begin{proof}
Suppose $\A$ is effectively dichotomic so that there exist dichotomic meters $\{\B^{(i)}\}_{i=1}^l$, a probability distribution $(p_i)_{i=1}^l$ and post-processings $\nu^{(i)}: \{+, -\} \to \{1 ,\ldots, n\}$  for all $i=1, \ldots,l$ such that
\begin{align*}
\A_j = \sum_{i=1}^l p_i \left( \nu^{(i)}_{+j} \B^{(i)}_+ + \nu^{(i)}_{-j} \B^{(i)}_- \right)
\end{align*}
for all $j \in \{1, \ldots, n \}$, where we may assume that $p_i \neq 0$ for all $i=1, \ldots,l$. By the assumption, $\A_j$ is indecomposable for all $j \in \{1, \ldots, m\}$. Thus, for each $j$, there exists index sets $I^{(j)}_\pm := \{ 1\leq i \leq l \, | \, \nu^{(i)}_{\pm j} \neq 0 \}$ such that $\B^{(i)}_+ = \alpha^{(j)}_i \A_j$ and $\B^{(k)}_- = \beta^{(j)}_k \A_j$ for some $\alpha^{(j)}_i, \beta^{(j)}_k \in (0,1]$ for all $i \in I^{(j)}_+$ and $k \in I^{(j)}_-$.

First of all, we note that $I^{(j)}_+ \cap I^{(j)}_- = \emptyset$ for all $j \in \{1, \ldots,m\}$ because otherwise $\A_j$ would be proportional to $u$ due to the normalisation of $\B^{(i)}$'s. Secondly, from $i)$ it follows that $I^{(j)}_+ \cap I^{(k)}_+ = I^{(j)}_- \cap I^{(k)}_- = \emptyset$ for all $k,j \in \{1, \ldots,m\}$ such that $j\neq k$. Thirdly, from $ii)$ it follows that $I^{(j)}_+ \cap I^{(k)}_- = \emptyset$ for all $k,j \in \{1, \ldots,m\}$ such that $j\neq k$. Thus, the sets $\{I^{(j)}_\pm \}_{j=1}^m$ form a partition of their union $I:= \bigcup_{j=1}^m \left( I^{(j)}_+ \cup I^{(j)}_- \right) \subseteq \{1, \ldots, l\}$.

We can now write
\begin{equation}
\A_j = \sum_{i \in I^{(j)}_+} p_i \nu^{(i)}_{+j} \B^{(i)}_+ + \sum_{k \in I^{(j)}_+} p_k \nu^{(k)}_{+j} \B^{(k)}_+ \leq \left( \sum_{i \in I^{(j)}_+} p_i + \sum_{k \in I^{(j)}_+} p_k \right) u.
\end{equation}
From the above expression and the properties of the index sets it follows that
\begin{align*}
\sum_{j=1}^m \lmax(\A_j) \leq \sum_{j=1}^m \left( \sum_{i \in I^{(j)}_+} p_i +  \sum_{k \in I^{(j)}_-} p_k \right) = \sum_{i \in I} p_i \leq \sum_{i=1}^l p_i =1.
\end{align*}
\end{proof}

\begin{example}\label{ex:ud}
(\emph{Unambiguous discrimination of two qubit states}) Let
$\varrho_1 = \ket{\psi_1}\bra{\psi_1}$ and $\varrho_2 =
\ket{\psi_2}\bra{\psi_2}$ be two pure qubit states with \textit{a
priori} probabilities $p_1 = p_2 = \frac{1}{2}$. The unambiguous
discrimination of these states involves a 3-outcome POVM with
effects $\A_1,\A_2,\A_{?}$ such that observation of the outcome 1
(2) guarantees that the input state was $\varrho_1$ ($\varrho_2$).
This implies
$$
{\rm tr}[\varrho_1 A_2] = {\rm tr}[\varrho_2 A_1] =
0
$$
and hence
$$
\A_1 = q_1 (\id - \ket{\psi_2}\bra{\psi_2})\, , \quad \A_2 = q_2 (\id
- \ket{\psi_1}\bra{\psi_1})
$$
for some $q_1,q_2 > 0$ such that $\A_{?} = \id - \A_1 - \A_2$ is a valid effect, i.e., $\A_{?} \geq O$.
Suppose $\A$ is effectively dichotomic. Then by Prop. \ref{prop:not-2-eff} we have $q_1 + q_2 \leq 1$ and the success probability is
\begin{align}\label{eq:usd-bound}
p_{\rm success} & = \half \tr{\varrho_1 A_1} + \half \tr{\varrho_2 A_2} \nonumber \\
& = \frac{q_1 + q_2}{2} \left( 1 - | \langle
\psi_1 | \psi_2 \rangle |^2 \right) \leq \frac{1}{2} \left( 1 - |
\langle \psi_1 | \psi_2 \rangle |^2 \right) \, .
\end{align}
However, it is known \cite{QDET76} that the optimal success probability without any limitations is $1-\mo{\ip{\psi_1}{\psi_2}}$.
This is strictly higher than the bound in \eqref{eq:usd-bound} whenever $\varrho_1$ and $\varrho_2$ are two different states.
We conclude that the restriction to dichotomic meters decreases the optimal success probability in unambiguous discrimination.
\end{example}

%%%%%%%%%%%%%
\section{Restriction class (R3), noise and compatibility }\label{sec:R3}
%%%%%%%%%%%%%

In this section we present some examples of restrictions that arise quite naturally and belong to the class (R3).

%%%%%%%%%%%%%%%%%%%%%%%
\subsection{Compatibility restriction}
%%%%%%%%%%%%%%%%%%%%%%%

We recall that two meters $\A$ and $\B$ are compatible if they can be simulated with a single meter $\C$, i.e., $\{\A,\B\} \subset \simu{\C}$. Let us a fix a meter $\A$ and consider all meters that are compatible with $\A$; we denote this set by $C(\A)$.  The conditions for $C(\A)\neq\meter$ have been characterized in \cite{HeLePl19}.
In the following we assume that $C(\A)\neq\meter$ and choose $\rmeter=C(\A)$.

\begin{proposition} \label{prop:compatibility}
$\simu{C(\A)}=C(\A)$.
\end{proposition}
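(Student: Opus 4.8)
The plan is to prove that $C(\A)$ is simulation closed by showing $\simu{C(\A)} \subseteq C(\A)$, since the reverse inclusion is immediate from (SIM1). The key observation is that compatibility with $\A$ should be preserved under the two basic simulation operations---mixing and post-processing---and then one leverages the fact that $\simu{\cdot}$ is generated by exactly these operations together with the transitivity-like property (SIM2).

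First I would recall that $\B \in C(\A)$ means there is a single ``mother'' meter $\C$ with $\{\A, \B\} \subset \simu{\C}$. Equivalently, and perhaps more usefully, compatibility of $\A$ and $\B$ means there is a joint meter $\G$ whose marginals recover both $\A$ and $\B$ via post-processing; I would phrase everything in terms of $\{\A,\B\} \subset \simu{\C}$ to match the excerpt's definition. Now take $\D \in \simu{C(\A)}$. By definition of the simulation map, $\D$ is built from finitely many meters $\B^{(1)}, \ldots, \B^{(m)} \in C(\A)$ by mixing with weights $(p_i)_i$ and post-processings $\nu^{(i)}$. For each $i$ there is a mother meter $\C^{(i)}$ with $\{\A, \B^{(i)}\} \subset \simu{\C^{(i)}}$. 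I want to produce a single meter $\C$ that simulates both $\A$ and $\D$.

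The central construction is to combine the $\C^{(i)}$ into one mother meter. A natural candidate is a meter $\C$ whose outcome set is the disjoint union (or product with the index $i$) of the outcome sets of the $\C^{(i)}$, weighted by $p_i$; concretely one can take $\C$ to be the mixture $\sum_i p_i \C^{(i)}$ after padding outcome sets, or better, a meter that records both the index $i$ and the outcome of $\C^{(i)}$. From such a $\C$ one recovers $\A$ by first post-processing each $\C^{(i)}$ down to $\A$ (all $\C^{(i)}$ simulate $\A$) and then merging over $i$, and one recovers $\D$ by applying the post-processing $\nu^{(i)}$ on the $i$-th block. Thus $\{\A, \D\} \subset \simu{\C}$, giving $\D \in C(\A)$. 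The cleaner way to organize this is to use the established closure properties: since each $\B^{(i)} \in \simu{\C^{(i)}}$ and compatibility is really a statement about membership in a common simulation set, I would invoke (SIM2) and (SIM3) to argue that $\simu{\{\C^{(1)}, \ldots, \C^{(m)}\}}$ contains $\A$ (as it contains each $\simu{\C^{(i)}} \ni \A$) and contains $\D$ (as it contains all the $\B^{(i)}$, hence their simulations), so that a single mother meter generating $\simu{\{\C^{(1)}, \ldots, \C^{(m)}\}}$ witnesses $\D \in C(\A)$.

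The main obstacle I anticipate is the bookkeeping in the combined mother meter: one must carefully pad outcome sets and define the post-processings so that the \emph{same} $\C$ simultaneously reduces to $\A$ and to $\D$, rather than needing different mother meters for the two targets. In particular, recovering $\A$ from $\C$ requires that the post-processing ignoring the index $i$ and mapping each block to $\A$ is well-defined and consistent across blocks, which works precisely because every $\C^{(i)}$ already simulates the fixed meter $\A$. Verifying that $\simu{\{\C^{(1)},\ldots,\C^{(m)}\}}$ is itself the simulation set of a single meter (so that compatibility in the two-meter sense applies) is the step requiring the most care, but it follows from the general fact that a finite family always admits a common simulator built by mixing and tagging outcomes.
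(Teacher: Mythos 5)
Your primary construction is correct and is essentially the paper's own proof: the paper forms exactly the tagged, weighted mother meter $\tilde{\C}_{(i,z)}=p_i\C^{(i)}_z$, recovers $\A$ by applying on the $i$-th block the post-processing $\mu^{(i)}$ witnessing $\A\in\simu{\C^{(i)}}$ (consistent across blocks because every block simulates the same fixed $\A$, so the resulting mixture is $\sum_i p_i\A=\A$), and recovers $\D$ by applying on the $i$-th block the composition of $\nu^{(i)}$ with the post-processing $\kappa^{(i)}$ taking $\C^{(i)}$ to $\B^{(i)}$. Note that this composition step is needed, since $\nu^{(i)}$ acts on $\Omega_{\B^{(i)}}$ rather than on $\Omega_{\C^{(i)}}$; you elide it, but it is harmless bookkeeping. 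Also, of your two candidate mother meters, only the tagged one works: the untagged mixture $\sum_i p_i\C^{(i)}$ destroys the index information needed to apply different post-processings on different blocks.

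However, your ``cleaner'' reorganization rests on a claim that is genuinely false: it is not true that a finite family $\{\C^{(1)},\ldots,\C^{(m)}\}$ always admits a common simulator, i.e.\ that $\simu{\{\C^{(1)},\ldots,\C^{(m)}\}}$ is the simulation set of a single meter. If that were true, then any two meters would be compatible (apply the claim to the family $\{\A,\B\}$), so $C(\A)=\meter$ for every $\A$ and the proposition would be vacuous; the paper explicitly works under the assumption $C(\A)\neq\meter$, and incompatible meters exist in quantum theory. The reason tagging does not prove this stronger claim is that the post-processings of $\tilde{\C}$ are exactly the meters of the form $\sum_i p_i(\lambda^{(i)}\circ\C^{(i)})$, i.e.\ mixtures with the \emph{fixed} weights $p_i$ of post-processings of the $\C^{(i)}$'s, and an individual $\C^{(j)}$ is generally not of this form; hence $\simu{\tilde{\C}}\subsetneq\simu{\{\C^{(1)},\ldots,\C^{(m)}\}}$ in general. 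The direct argument succeeds precisely because the two particular meters you need, $\A=\sum_i p_i(\mu^{(i)}\circ\C^{(i)})$ and $\D=\sum_i p_i(\nu^{(i)}\circ\kappa^{(i)}\circ\C^{(i)})$, do happen to be fixed-weight mixtures of this form. Keep the direct construction and drop the (SIM2)/(SIM3) repackaging.
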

\begin{proof}
To see this, take $\D \in \simu{C(\A)}$ so that there exist meters
$\{\B^{(i)}\}_i \subset C(\A)$ such that $\D = \sum_{i=1}^n p_i
(\nu^{(i)} \circ \B^{(i)})$ for some probability distribution
$(p_i)_{i=1}^n$ and post-processings $\nu^{(i)}: \Omega_{\B^{(i)}}
\to \Omega_\D$ for all $i \in \{1, \ldots,n\}$. If we define a new
meter $\tilde{\B}$ as $\tilde{\B}_{(i,x)} = p_i \B^{(i)}_x$ for
all $i \in \{1, \ldots, n\}$ and $x \in \Omega_{\B^{(i)}}$ (where
we can take $\Omega_{\B^{(i)}} = \Omega_{\B^{(j)}}=: \Omega_\B$
for all $i,j$), we see that
\begin{equation*}
\D_y = \sum_{i,x} p_i \nu^{(i)}_{xy} \B^{(i)}_x = (\nu \circ \tilde{\B})_y
\end{equation*}
for all $y \in \Omega_\D$, where we have defined a post-processing $\nu: \{1, \ldots,n\} \times \Omega_\B \to \Omega_\D$ by $\nu_{(i,x)y}= \nu^{(i)}_{xy}$ for all $i \in \{1, \ldots,n\}$, $x \in \Omega_\B$ and $y \in \Omega_\D$. Thus, $\D \in \simu{\tilde{\B}}$.

Since $\B^{(i)} \in C(\A)$, for any $i \in \{1, \ldots, n\}$ there
exists $\C^{(i)} \in \meter$ such that $\{\A, \B^{(i)}\} \subset
\simu{\C^{(i)}}$. Similarly to $\tilde{\B}$, we can define
$\tilde{\C}$ by setting $\tilde{\C}_{(i,z)} = p_i \C^{(i)}_z$ for
all $i \in \{1, \ldots, n\}$ and $z \in \Omega_\C$. Since $\A \in
\simu{\C^{(i)}}$, there exists a post-processing $\mu^{(i)}:
\Omega_\C \to \Omega_\A$ for all $i \in \{1, \ldots, n\}$ such
that $\A = \mu^{(i)} \circ \C^{(i)}$ so that
\begin{equation*}
\A_k = \sum_i p_i \A_k = \sum_{i} p_i (\mu^{(i)} \circ \C^{(i)})_k = \sum_{i,z} p_i \mu^{(i)}_{zk} \C^{(i)}_z = (\mu \circ \tilde{\C})_k
\end{equation*}
for all $k \in \Omega_\A$, where we have defined another post-processing $\mu: \{1, \ldots,n\} \times \Omega_\C \to \Omega_\A$ by $\mu_{(i,z)k}= \mu^{(i)}_{zk}$ for all $i \in \{1, \ldots,n\}$, $z \in \Omega_\C$ and $k \in \Omega_\A$. Thus, $\A \in \simu{\tilde{\C}}$.

On the other hand, since $\B^{(i)} \in \simu{\C^{(i)}}$ for all $i \in \{1, \ldots, n\}$, there exists post-processings $\kappa^{(i)}: \Omega_\C \to \Omega_\B$ such that $\B^{(i)} = \kappa^{(i)} \circ \C^{(i)}$ so that
\begin{equation*}
\tilde{\B}_{(i,x)} = p_i \B^{(i)}_x = \sum_z \kappa^{(i)}_{zx} p_i \C^{(i)}_z = ( \kappa \circ \tilde{\C})_{(i,x)}
\end{equation*}
for all $(i,x) \in \{1, \ldots, n\} \times \Omega_\B$, where we
have defined yet another post-processing $\kappa: \{1, \ldots,n\}
\times \Omega_\C \to \{1, \ldots, n\} \times \Omega_\B$ by
$\kappa_{(j,z)(i,x)}= \delta_{ij} \kappa^{(j)}_{zx}$ for all $i,j
\in \{1, \ldots,n\}$, $z \in \Omega_\C$ and $x \in \Omega_\B$.
Hence, $\tilde{\B} \in \simu{\tilde{\C}}$ and $\D \in
\simu{\tilde{\C}}$.

To conclude, we have shown that $\{\A, \D\} \subset
\simu{\tilde{\C}}$, i.e., $\A$ and any $\D \in \simu{C(\A)}$ are
compatible, therefore $\simu{C(\A)} \subset C(\A)$ and $C(\A)$ is
simulation closed.
\end{proof}

Interestingly, in quantum theory the restriction $C(\A)$ can be
either (R1) or (R3), depending on $\A$. Firstly, if $\A$ is a
sharp quantum meter, i.e., every $\A_x$ is a projection operator,
then a meter $\B$ is compatible with $\A$ if and only if
$[\A_x,\B_y]=O$ for all outcomes $x,y$ \cite{PSAQT82}. The
restriction $C(\A)$ is then of the type (R1) as
$C(\A)=\meter_{\reffect}$, where
\begin{equation*}
\reffect = \{E \in \eh : [E,\A_x]=O \ \forall x \} \, .
\end{equation*}
Secondly, to see that the restriction $C(\A)$ can be of the type
(R3) we recall the result in \cite{ReReWo13}, which demostrates
the existence of quantum meters $\A$ and $\B$ in $\complex^3$ such
that $\A$ is a dichotomic, $\B$ is trichotomic, and they are
coexistent but not compatible. The coexistence of $\A$ and $\B$
means that all coarse-grainings of $\B$ into dichotomic meters are
compatible with $\A$. The union of the ranges of all dichotomic
coarse-grainings of $\B$ is the same as the range of $\B$. This
result implies that there is no $\reffect$ such that $C(\A)=
\meter_{\reffect}$. Finally, to see that $C(\A)$ cannot be of the
type (R2), we observe that $\effect_{C(\A)}=\effect$ implies that
$\A$ is compatible with all dichotomic meters. If this is the
case, every $\A_x$ commutes with all projection operators and
hence $\A_x$ is a multiple of the identity operator $\id$. But
then $C(\A)=\meter$ and we do not have a restriction at all, which
is a contradiction in general so $\effect_{C(\A)} \neq \effect$.

\subsection{Noise restriction on meters}
Let us denote by $\mathcal{P}(\Omega)$ the set of probability distributions on a (finite) set $\Omega$. Let us fix $t \in [0,1]$ and define a restriction $\rmeter_t$ on meters as
\begin{equation} \label{eq:noisy-meters}
\rmeter_t = \{ t \B +(1-t) p u \, | \, \B \in \meter, \ p \in
\mathcal{P}(\Omega_\B) \} .
\end{equation}
Clearly, if $t =1$, we have $\rmeter_1 = \meter$, and if $t= 0$ we have $\rmeter_0 = \trivial$, where $\trivial$ is the set of trivial meters. Thus, we can interpret the parameter $t$ as noise on the meters so that the smaller $t$ gets, the noisier the meters in $\rmeter_t$ become.

Let now $t \in (0,1)$. We will show that then $\rmeter_t$ is a
restriction of type (R3). First of all, we see that $\rmeter_t$ is
simulation closed: If we take $\A \in \simu{\rmeter_t}$ so that
$\A = \sum_i p_i (\nu^{(i)} \circ \B^{(i)} )$ for some meters
$\{\B^{(i)} = t \C^{(i)} +(1-t) q^{(i)} u\}_i \subset \rmeter_t$,
some post-processings $\nu^{(i)}: \Omega_\B \to \Omega_\A$, and
some probability distribution $(p_i)_i$, then
\begin{equation*}
\A_y = \sum_{i,x} p_i \nu^{(i)}_{xy} \B^{(i)}_x = \sum_{i,x} p_i \nu^{(i)}_{xy} [t \C^{(i)}_x +(1-t) q^{(i)}_x u] = t \C_y +(1-t) q_y u,
\end{equation*}
where we have defined a new meter $\C = \sum_i p_i (\nu^{(i)} \circ \C^{(i)}) \in \meter$ and a new probability distribution $q \in \mathcal{P}(\Omega_\A)$ by setting $q_y = \sum_{i,x} p_i \nu^{(i)}_{xy} q^{(i)}_x$. Thus, $\A = t\C +(1-t) q u \in \rmeter_t$ so that $\rmeter_t$ is simulation closed.

Next we see that
\begin{equation}\label{eq:noisy-effects}
\effect_{\rmeter_t} = \{ t e + (1-t) r u \, | \, e \in \effect, \ r \in [0,1] \}
\end{equation}
so that $\effect_{\rmeter_t} \subsetneq \effect$ since $t \neq 1$. Thus, $\rmeter_t$ is not of type (R2). What remains to show is that $\rmeter_t  \neq \meter_{\reffect}$ for all effect restrictions $\reffect \subset \effect$.

Our first observation is that if a restriction on meters $\rmeter$ is induced by some effect restriction $\reffect$, i.e., $\rmeter= \meter_{\reffect}$, then the effect restriction $\reffect$ is unique and is given by the induced effects $\effect_{\rmeter} = \effect_{\meter_{\reffect}}$.

\begin{proposition}\label{prop:effect_restriction}
For a restriction $\meter_{\reffect}$ induced by an effect
restriction $\reffect \subset \effect$ satisfying the
consistency conditions (E1) and (E2) we have that
$\effect_{\meter_{\reffect}}= \reffect$.
\end{proposition}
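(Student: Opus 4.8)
The plan is to prove the set equality $\effect_{\meter_{\reffect}}=\reffect$ by establishing the two inclusions separately, each of which follows by simply unwinding the relevant definitions. Recall that $\effect_{\meter_{\reffect}}$ is, by definition, the set of all $e\in\effect$ for which $e\in\ran(\A)$ for some $\A\in\meter_{\reffect}$, and that $\meter_{\reffect}$ consists of exactly those $\A\in\meter$ with $\ran(\A)\subset\reffect$.

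First I would treat the inclusion $\effect_{\meter_{\reffect}}\subseteq\reffect$. Take $e\in\effect_{\meter_{\reffect}}$; then by definition there is a meter $\A\in\meter_{\reffect}$ with $e\in\ran(\A)$. But membership $\A\in\meter_{\reffect}$ means precisely that $\ran(\A)\subset\reffect$, and hence $e\in\reffect$. This direction is purely definitional and uses no consistency conditions.

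For the reverse inclusion $\reffect\subseteq\effect_{\meter_{\reffect}}$ I would invoke condition (E2). Indeed, (E2) states that for every $e\in\reffect$ there exists a meter $\A\in\meter_{\reffect}$ such that $e\in\ran(\A)$; by the definition of $\effect_{\meter_{\reffect}}$ this is exactly the statement that $e\in\effect_{\meter_{\reffect}}$. Combining the two inclusions yields the claimed equality.

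I do not expect any genuine obstacle here: the statement is essentially a reformulation of the consistency conditions, with (E2) being the one substantive ingredient, since it is precisely the requirement that every allowed effect be realizable as part of some allowed meter, and thus guarantees that nothing in $\reffect$ is lost when passing to $\meter_{\reffect}$ and back. Condition (E1) plays only the background role of ensuring $\meter_{\reffect}$ is nonempty and well-posed. The only thing to be careful about is to cite (E2) for the correct inclusion and to keep the reverse inclusion strictly definitional, so that the short argument does not inadvertently rely on convexity or on the simulation-closedness of $\meter_{\reffect}$, neither of which is needed for this particular statement.
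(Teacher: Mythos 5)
Your proof is correct. The forward inclusion is identical to the paper's: membership of $\A$ in $\meter_{\reffect}$ means $\ran(\A)\subset\reffect$, so $\effect_{\meter_{\reffect}}\subseteq\reffect$ is purely definitional. For the reverse inclusion your mechanism differs slightly from the paper's, though not in substance. You observe that (E2), as formulated, is verbatim the statement $\reffect\subseteq\effect_{\meter_{\reffect}}$: it asserts that every $e\in\reffect$ lies in $\ran(\A)$ for some $\A\in\meter_{\reffect}$, which is exactly the defining condition for $e\in\effect_{\meter_{\reffect}}$. The paper instead takes $f\in\reffect$ and exhibits a concrete witness: the dichotomic meter $\F$ with effects $f$ and $u-f$, whose range $\{o,f,u-f,u\}$ lies in $\reffect$ by (E1) together with Lemma~\ref{lemma:effect-restrictions} (which supplies $o\in\reffect$ and $u-f\in\reffect$), so that $\F\in\meter_{\reffect}$ and $f\in\effect_{\meter_{\reffect}}$. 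Both arguments ultimately rest on the consistency conditions, since Lemma~\ref{lemma:effect-restrictions} is itself derived from (E1) and (E2); yours is the more economical, recognizing that no construction is needed once (E2) is taken as a hypothesis, while the paper's route has the mild added value of showing that the witnessing meter can always be chosen dichotomic. Your closing remarks are also accurate: neither convexity of $\reffect$ nor simulation closedness of $\meter_{\reffect}$ enters anywhere, and (E1) plays no active role in your version of the argument.
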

\begin{proof}
Let us take $e \in \effect_{\meter_{\reffect}}$ so that there
exists $\A \in \meter_{\reffect}$ such that $e \in \ran(\A)$. From
the definition of $\meter_{\reffect}$ it follows that $e \in
\ran(\A) \subset \reffect$. Thus, $\effect_{\meter_{\reffect}}
\subseteq \reffect$.

For the other direction let us take $f \in \reffect$. As it was
stated earlier, any dichotomic meter $\F$ with effects $f$ and
$u-f$ must be in $\meter_{\reffect}$ so that from the definition
of $\effect_{\meter_{\reffect}}$ we see that $f \in
\effect_{\meter_{\reffect}}$. Thus, $\reffect \subseteq
\effect_{\meter_{\reffect}}$. Combining
$\effect_{\meter_{\reffect}} \subseteq \reffect$ and $\reffect
\subseteq \effect_{\meter_{\reffect}}$, we have the claim.
\end{proof}

Thus, for $\rmeter_t$, the previous result implies that if
$\rmeter_t = \meter_{\reffect}$ for some $\reffect \subset
\effect$, then $\reffect = \effect_{\rmeter_t}$. First of all, one
can readily see that $\effect_{\rmeter_t}$ is convex and hence, by
Theorem \ref{thm:R1}, $\meter_{\effect_{\rmeter_t}}$ is simulation
closed as it should be. We will proceed by constructing a meter
$\A \in \meter_{\reffect}$ such that $\A \notin \rmeter_t$.

To see when a given meter is in $\rmeter_t$, we give a convenient
characterization for $\rmeter_t$ in terms of the \emph{noise
content} of a meter \cite{fhl-2017}. The noise content $w(\B;
\noise)$ of a meter $\B \in \meter$ with respect to a noise set
$\noise \subset \meter$ is defined as
\begin{equation*}
w(\B; \noise) = \sup \{ 0 \leq \lambda \leq 1 \, | \, \exists \C
\in \meter, \N \in \noise: \ \B = \lambda \N +(1-\lambda) \C \}.
\end{equation*}
The noise content $w(\B; \noise)$ thus characterizes how much of
$\B$ is in $\noise$ with respect to the convex structure of
meters. When $\noise$ is chosen to represent some noise in the
meters, the noise content can be interpreted as the amount of the
intrinsic noise that is present in the meter (contrary to the
external noise that is typically added to a meter). A typical
choice for $\noise$ is to set $\noise = \trivial$, the set of
trivial meters. In this case it can be shown that
\begin{equation}\label{eq:noise-content}
w(\B; \trivial) = \sum_{x \in \Omega_\B} \lmin(\B_x).
\end{equation}

We can now give the following characterization for $\rmeter_t$:
\begin{lemma}\label{lemma:noise-restriction}
Meter $\B \in \rmeter_t$ if and only if $w(\B; \trivial) \geq 1-t$.
\end{lemma}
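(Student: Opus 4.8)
The plan is to prove the two implications separately, leaning throughout on the explicit formula \eqref{eq:noise-content} for the trivial noise content, $w(\B;\trivial) = \sum_{x \in \Omega_\B} \lmin(\B_x)$, which I may assume.

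For the forward implication, I would suppose $\B \in \rmeter_t$. By the definition \eqref{eq:noisy-meters} there is a meter $\C \in \meter$ and a distribution $p \in \mathcal{P}(\Omega_\B)$ with $\B = t\C + (1-t)\,p\,u$. Reading this as the convex decomposition $\B = (1-t)\N + t\C$ with the trivial meter $\N$ given by $\N_x = p_x u$, the value $\lambda = 1-t$ is admissible in the supremum defining $w(\B;\trivial)$, so $w(\B;\trivial) \geq 1-t$ follows directly; no attainment of the supremum is needed here.

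For the reverse implication, I would suppose $w(\B;\trivial) = \sum_x \lmin(\B_x) =: s \geq 1-t$, noting $s \geq 1-t > 0$ since $t \in (0,1)$, so division by $s$ is legitimate. The idea is to construct the required decomposition explicitly. Set $p_x := \lmin(\B_x)/s$, which is a genuine probability distribution on $\Omega_\B$ (nonnegative, summing to $s/s=1$), and define $\C_x := \tfrac{1}{t}\big(\B_x - (1-t)p_x u\big)$. Normalization is automatic: $\sum_x \C_x = \tfrac{1}{t}\big(u - (1-t)u\big) = u$. The one point that requires care is positivity of the candidate effects $\C_x$. Since $\B_x - \lmin(\B_x)u \geq o$ by definition of $\lmin$, it suffices that $(1-t)p_x \leq \lmin(\B_x)$; but $(1-t)p_x = (1-t)\lmin(\B_x)/s \leq \lmin(\B_x)$ precisely because $s \geq 1-t$. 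Hence $\C_x \geq o$, and together with $\sum_x \C_x = u$ this forces $\C_x \leq u$, so $\C$ is a valid meter and $\B = t\C + (1-t)\,p\,u \in \rmeter_t$.

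The main (and essentially only) obstacle is this positivity check in the reverse direction, and it is exactly the hypothesis $s \geq 1-t$ that keeps the chosen weights $p_x = \lmin(\B_x)/s$ small enough to render each $\C_x$ positive. Choosing $p$ proportional to the pointwise minima is the device that converts the single scalar inequality $\sum_x \lmin(\B_x) \geq 1-t$ into the effect-level positivity $\B_x \geq (1-t)p_x u$ needed simultaneously for every outcome $x$.
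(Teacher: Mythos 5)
Your proof is correct, and while the forward implication is the same one-line argument as in the paper, your reverse implication takes a genuinely different route. The paper's proof does not construct anything by hand: it notes that for $\noise=\trivial$ the supremum defining $w(\B;\trivial)$ is attained, takes an optimal decomposition $\B = w\,\T + (1-w)\,\D$ with $w = w(\B;\trivial)$, trivial $\T$ and some meter $\D$, and then redistributes the excess trivial weight by writing
\begin{equation*}
\B = (1-t)\,\T + t\left[\tfrac{t+w-1}{t}\,\T + \tfrac{1-w}{t}\,\D\right],
\end{equation*}
where the bracketed meter is a convex mixture precisely because $w \geq 1-t$. You instead build the decomposition explicitly from Eq.~\eqref{eq:noise-content}: choosing $p_x = \lmin(\B_x)/s$ with $s=\sum_x \lmin(\B_x)$ and setting $\C_x = \tfrac{1}{t}\bigl(\B_x - (1-t)p_x u\bigr)$, with positivity of $\C_x$ secured by $(1-t)p_x \leq \lmin(\B_x)$, which is exactly where the hypothesis $s \geq 1-t$ enters. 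What your version buys is self-containedness: you never need attainment of the supremum as a separate fact (the paper's justification of attainment via Eq.~\eqref{eq:noise-content} is terse and really leans on the construction in the cited reference \cite{fhl-2017}, which is essentially the decomposition you write down). What the paper's version buys is brevity and a degree of generality: granted attainment, the redistribution argument is three lines of algebra and never needs the explicit form of the optimal trivial meter, so it would apply verbatim to any noise set for which the supremum is attained. Both proofs, of course, ultimately rest on Eq.~\eqref{eq:noise-content} --- yours directly, the paper's through the attainment claim.
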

\begin{proof}
Let first $\B \in \rmeter_t$ so that there exists $\C \in \meter$ and $p \in \mathcal{P}(\Omega_\B)$ such that $\B = t \C +(1-t) p u= t \C +(1-t) \T$, where we have defined a trivial meter $\T \in \trivial$ by $\T_x =p_x u$ for all $x \in \Omega_\B$. From the definition of the noise content we see that $w(\B; \trivial) \geq 1-t$.

Let then $w(\B; \trivial) \geq 1-t$. Since we have the noise set
$\noise= \trivial$, by Eq. \eqref{eq:noise-content} the supremum
in the definition of the noise content is attained so there exist
$\D  \in \meter$ and $\T \in \trivial$ such that $\B = w(\B;
\trivial) \T +(1-w(\B;\trivial)) \D$. We have
\begin{align*}
\B &= w(\B; \trivial) \T + (1-w(\B; \trivial) ) \D \\
&= (1-t+t+w(\B; \trivial)-1) \T +(1- w(\B; \trivial)) \D \\
&= (1-t) \T + t \left[ \dfrac{t+w(\B;\trivial)-1}{t} \T + \dfrac{1- w(\B; \trivial)}{t} \D \right] \\
&= (1-t) \T + t \tilde{\D} \in \rmeter_t,
\end{align*}
where $\tilde{\D}= \frac{t+w(\B;\trivial)-1}{t} \T + \frac{1- w(\B; \trivial)}{t} \D \in \meter$ is a convex mixture of $\T$ and $\D$.
\end{proof}

Now we are ready to prove that $\rmeter_t$ is a restriction of type (R3) for all $t \in (0,1)$.
\begin{proposition} \label{prop:noisy-meters}
Let $t \in (0,1)$. Then $\rmeter_t \neq \meter_{\reffect}$ for any
$\reffect \subset \effect$.
\end{proposition}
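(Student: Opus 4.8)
As noted just before the statement, Proposition~\ref{prop:effect_restriction} forces the only candidate effect restriction to be $\reffect=\effect_{\rmeter_t}$, and since $\rmeter_t\subseteq\meter_{\effect_{\rmeter_t}}$ always holds, the claim is equivalent to the strictness of this inclusion. The plan is therefore to exhibit a single meter $\A\in\meter_{\effect_{\rmeter_t}}\setminus\rmeter_t$. It is convenient first to translate the two memberships into quantitative conditions on $\A$. By Lemma~\ref{lemma:noise-restriction} together with \eqref{eq:noise-content}, the condition $\A\notin\rmeter_t$ reads $\sum_{x}\lmin(\A_x)<1-t$. On the other hand, a direct check on \eqref{eq:noisy-effects} shows that $f\in\effect_{\rmeter_t}$ if and only if $\lmax(f)-\lmin(f)\leq t$ (writing $f=te+(1-t)ru$ gives $\lmax(f)-\lmin(f)=t\,(\lmax(e)-\lmin(e))\leq t$, and conversely a suitable $r$ always exists). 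Hence $\A\in\meter_{\effect_{\rmeter_t}}$ is equivalent to demanding that every coarse-grained effect $\sum_{x\in\tilde\Omega}\A_x$, $\tilde\Omega\subseteq\Omega_\A$, has spread at most $t$.

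The task is thus to build a meter all of whose coarse-grainings have spread $\leq t$ while its effects together carry ``minimal mass'' strictly below $1-t$. A trine meter does the job. On a qubit (equivalently, on a two-dimensional block of any $\hi$), pick coplanar unit vectors $\vec{n}_1,\vec{n}_2,\vec{n}_3$ at $120^\circ$, so that $\sum_x\vec{n}_x=0$, and for $t\leq\tfrac{2}{3}$ set
\[
\A_x=\tfrac{1}{3}\,\id+\tfrac{t}{2}\,\vec{n}_x\cdot\vsigma,\qquad x=1,2,3.
\]
Then $\A_x\geq O$, $\sum_x\A_x=\id$, each $\A_x$ has eigenvalues $\tfrac{1}{3}\pm\tfrac{t}{2}$ and hence $\lmin(\A_x)=\tfrac{1}{3}-\tfrac{t}{2}$ and spread exactly $t$. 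The coarse-graining condition is then immediate, since a pairwise sum equals $u-\A_k$ and so shares the spread $t$ of the remaining effect; every element of $\ran(\A)$ thus has spread $\leq t$ and $\A\in\meter_{\effect_{\rmeter_t}}$. At the same time $\sum_x\lmin(\A_x)=1-\tfrac{3t}{2}<1-t$, so $\A\notin\rmeter_t$. For $t>\tfrac{2}{3}$ the undeformed trine $\A_x=\tfrac{1}{3}(\id+\vec{n}_x\cdot\vsigma)$ works instead, with $\lmin(\A_x)=0$ and spread $\tfrac{2}{3}\leq t$, giving $\sum_x\lmin(\A_x)=0<1-t$. In either case $\A$ separates the two sets, which proves the proposition.

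The reduction in the first paragraph is routine; the real content is the construction, and that is where I expect the only genuine difficulty to sit. Conceptually, the obstruction to $\rmeter_t$ being of type (R1) is that no single coarse-graining can ever reveal more spread than $t$, whereas the three effects can jointly accumulate total minimal mass below $1-t$; this discrepancy is precisely what the induced effect set $\effect_{\rmeter_t}$ cannot see. Crucially, this requires a state space rich enough to support such a ``spread-out'' meter, i.e.\ three effects summing to $u$ that point in genuinely different directions. This is available in quantum theory and, more generally, whenever $\state$ is at least two-dimensional (a scaled classical trine works verbatim in a simplex), but it does fail for the most degenerate state spaces: for a single classical bit one checks that the coarse-graining condition already forces $\sum_x\lmin(\A_x)\geq 1-t$, so there $\rmeter_t$ is genuinely of type (R1). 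The construction must therefore make essential use of the non-triviality of $\state$.
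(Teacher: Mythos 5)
Your proof is correct for quantum theory and takes a genuinely different---and, it turns out, necessary---route from the paper's. The paper performs the same reduction (via Proposition~\ref{prop:effect_restriction}, Lemma~\ref{lemma:noise-restriction} and Eq.~\eqref{eq:noise-content}), but its witness is meant to be theory-independent: it fixes an extreme indecomposable effect $e$, decomposes $u-e$ into indecomposable effects, splits the first effects of the resulting meter into pieces $q\tilde{\B}_i$ and $(1-q)\tilde{\B}_i$, and then mixes in trivial noise, checking at the end only that the individual effects $\A_i=ta_i+(1-t)r_iu$ lie in $\effect_{\rmeter_t}$ and that $\sum_i\A_i=u$. By the paper's own definition of $\meter_{\reffect}$, that check is insufficient: membership requires \emph{all} of $\ran(\A)$, i.e.\ every coarse-graining, to lie in $\effect_{\rmeter_t}$ --- exactly the condition your spread criterion ($f\in\effect_{\rmeter_t}$ iff $\lmax(f)-\lmin(f)\leq t$) makes transparent. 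Moreover, the paper's meter actually fails this condition: merging outcomes $1$ and $n+1$ gives $\A_1+\A_{n+1}=(1-(1-t)r)\,e+(1-t)(r_1+r_{n+1})u$, whose spread is $(1-(1-t)r)(\lmax(e)-\lmin(e))=1-(1-t)r>t$ since $r<1$, $\lmax(e)=1$ and $\lmin(e)=0$. So the paper's proof, as written, does not go through, whereas your trine is immune by design: every pairwise coarse-graining is a complement $u-\A_k$ and hence has the same spread as a single effect.

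Your closing observation explains why no repair of a theory-independent construction is possible: for the classical bit, writing $d_x$ for the difference of the two values of $\A_x$ on the pure states, the coarse-graining over $\{x:d_x>0\}$ has spread $\tfrac{1}{2}\sum_x|d_x|$, so $\ran(\A)\subset\effect_{\rmeter_t}$ forces $\sum_x\lmin(\A_x)=1-\tfrac{1}{2}\sum_x|d_x|\geq 1-t$, i.e.\ $\meter_{\effect_{\rmeter_t}}=\rmeter_t$ there. The proposition as stated is therefore false for that state space, and a richness hypothesis on $\state$ (quantum theory in your construction, or affine dimension at least two) is essential; this is a correction to the paper rather than a defect of your argument. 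The only loose end on your side is the phrase ``on a two-dimensional block of any $\hi$'': to get a meter on the full space you must specify the effects on the orthogonal complement (e.g.\ append $\tfrac{1}{3}P^{\perp}$ to each $\A_x$, which preserves all spreads and the deficit $\sum_x\lmin(\A_x)<1-t$); that is a one-line fix.
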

\begin{proof}
Let us suppose that $\rmeter_t = \meter_{\reffect}$ for some
effect restriction $\reffect \subset \effect$. As it was mentioned
earlier, by Prop. \ref{prop:effect_restriction} we then have that
$\reffect= \effect_{\meter_{\reffect}} = \effect_{\rmeter_t}$. We
will construct a meter $\A \in \meter_{\reffect}$ such that $\A
\notin \rmeter_t$, which will then be a contradiction.

Let us start the construction of $\A$, by constructing another meter $\B$ with $w(\B;\trivial) =0$ and $\max_{x \in \Omega_\B} \lmax(\B_x) \in [t,1)$. We will then use $\B$ to construct $\A$ and use Lemma \ref{lemma:noise-restriction} together with the previously listed properties of $\B$ to show that $\A \notin \rmeter_t$.

Let us fix an extreme indecomposable effect $e \in \effect(\state)$. If we set $e_1
= e$ and decompose $u-e$ into indecomposable effects $u-e =
\sum_{i=2}^n e_i$ for some $n \in \nat$, we can define a meter
$\tilde{\B}$ as $\tilde{\B}_i = e_i$ for all $i \in \{1,
\ldots,n\}$ such that all of the effects of $\tilde{\B}$ are
indecomposable. Since indecomposable effects lie on the boundary
of the positive cone of the effect space, we have that
$\lmin(\tilde{\B}_i) =0$ for all $i \in \{1, \ldots,n\}$ so that
by Eq. \eqref{eq:noise-content} we have that $w(\tilde{\B};
\trivial) =0$.

Let us relabel the outcomes of $\tilde{\B}$ in such a way that
$\lmax(\tilde{\B}_i) = 1$ for all $i \in \{1, \ldots, m\}$ for
some $m \leq n$. We recall that since $e$ is an extreme effect we
know that $\lmax(e) =1$, and since $e \in
\mathrm{ran}(\tilde{\B})$ we must have $m\geq 1$. We take $q \in
[t,1)$ and define a new meter $\B$ with effects
\begin{equation}
\B_i =
\begin{cases}
q \tilde{\B}_i, & i\in \{1, \ldots,m\} \\
\tilde{\B}_i, & i \in \{m+1, \ldots, n\} \\
(1-q) \tilde{\B}_{i-n}, & i \in \{n+1, \ldots, n+m\}
\end{cases}.
\end{equation}
By construction we have that $w(\B;\trivial) =0$ and $l_\B:=\max_{x \in \Omega_\B} \lmax(\B_x) \in [t,1)$.

Let us now take numbers $\{r_i\}_{i=1}^{n+m} \subset [0,1]$ such
that $r:= \sum_{i=1}^{n+m} r_i \in \left[
\frac{l_\B-t}{(1-t)l_\B}, 1 \right)$ and define a new meter $\A$
by $\A_i = t a_i +(1-t)r_i u$, where we have defined $a_i =
\frac{1-(1-t)r}{t} \B_i$ for all $i \in \{1, \ldots, n+m\}$. Once
we show that $\A$ is well-defined and $\A \in
\meter_{\effect_{\meter_t}}$, we can use Eq.
\eqref{eq:noise-content} to see that $w(\A; \trivial) < 1-t$ so
that by Lemma \ref{lemma:noise-restriction} we have $\A \notin
\rmeter_t$ which completes the proof.

In order to show that $\A$ is well-defined we need to show that
$\A$ is a meter and that we can choose $\{r_i\}_i$ like we wanted.
The problematic parts in the definition of the sequence
$\{r_i\}_i$ are that we might have that
$\frac{l_\B-t}{(1-t)l_\B}<0$, which might lead to $r<0$, or
$\frac{l_\B-t}{(1-t)l_\B}\geq 1$, which would leave the interval
$\left[ \frac{l_\B-t}{(1-t)l_\B}, 1 \right)$ empty. However, from
the definition of $\B$ we see that $l_\B = \max_{x \in \Omega_\B}
\lmax(\B_x) \geq t$ so that $\frac{l_\B-t}{(1-t)l_\B} \geq 0$, and
since $l_\B <1$ it is easy to see that $\frac{l_\B-t}{(1-t)l_\B} <
1$. Thus, we can choose the sequence $\{r_i\}_i$ like we wanted.

In order to show that $\A \in \meter_{\effect_{\meter_t}}$ we need
to show that $a_i \in \effect(\state)$ for all $i \in \{1,
\ldots,n+m\}$ and that $\sum_i \A_i = u$. Since
$r<1<\frac{1}{1-t}$ we see that $\frac{1-(1-t)r}{t} >0$ so that
$a_i = \frac{1-(1-t)r}{t} \B_i\geq o$ for all $i \in \{1, \ldots,
n+m\}$. On the other hand, we have $a_i \leq u$ if and only if $
\frac{1-(1-t)r}{t} \lmax(\B_i) \leq 1$ which is equivalent to $r
\geq \frac{\lmax(\B_i) -t}{(1-t) \lmax(\B_i)}$. Since $r \geq
\frac{l_\B -t}{(1-t) l_\B} \geq \frac{\lmax(\B_i) -t}{(1-t)
\lmax(\B_i)}$, it follows that $a_i \leq u$ for all $i \in \{1,
\ldots,n+m\}$. Thus, $a_i \in \effect(\state)$ so that $\A_i \in
\effect_{\rmeter_t}$ for all $i \in \{1, \ldots,n+m\}$.
Furthermore, we see that
\begin{equation*}
\sum_{i=1}^{n+m} \A_i = (1-(1-t)r)  \left( \sum_{i=1}^{n+m} \B_i \right)   + (1-t)r u = (1-(1-t)r)u +(1-t)r u = u.
\end{equation*}
Hence, $\A \in \meter_{\effect_{\rmeter_t}}$.

For the noise content of $\A$, we see that
\begin{align*}
w(\A;\trivial) &= \sum_{i=1}^{n+m} \lmin(\A_i) = t  \left( \sum_{i=1}^{n+m} \lmin(a_i)\right) + (1-t)r  \\
&= (1-(1-t)r) \left(  \sum_{i=1}^{n+m} \lmin(\B_i) \right) +(1-t)r \\
&= (1-t)r < 1-t,
\end{align*}
which by Lemma \ref{lemma:noise-restriction} shows that $\A \notin \rmeter_t$.
\end{proof}

Thus, we have just demonstrated that if the noise is introduced at
the level of meters as in Eq. \eqref{eq:noisy-meters}, the induced
restriction cannot be reproduced by considering noise on effects
alone. However, one can of course start with Eq.
\eqref{eq:noisy-effects} and use it as a restriction on its own so
that we will naturally arrive at (R1) type of restriction instead.
The next example will illustrate this point in quantum theory.

\begin{figure}[t]
\begin{center}
\includegraphics[scale=1]{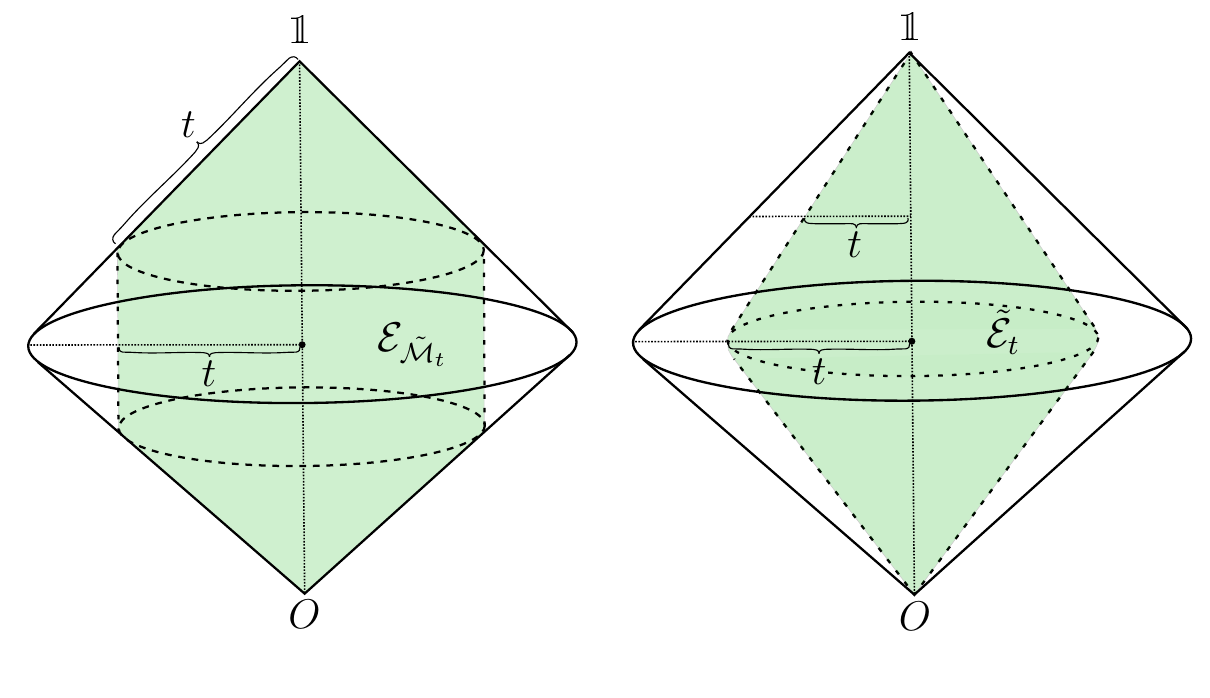}
\end{center}
\caption{\label{fig:qubit-restrictions} The effect restrictions given by Eq. \eqref{eq:noisy-effects} (on left) and \eqref{eq:depolarizing-noise} (on right) on a three-dimensional cross section of the qubit effect space. They are both convex subsets of effects that satisfy the consistency conditions so that they induce valid simulation closed restrictions on meters.}
\end{figure}

\begin{example}[\emph{Depolarizing noise in quantum theory}]\label{ex:depolarize}
In quantum theory, the standard depolarizing channel $\Phi_t: \lh \to \lh$ on a $d$-dimensional Hilbert space $\hi$ is defined as
\begin{equation}
\Phi_t(\varrho) = t \varrho +(1-t) \tr{\varrho} \frac{\id}{d}
\end{equation}
for all $\varrho \in \lh$ with some noise parameter $t \in [0,1]$.
In the Heisenberg picture, the depolarizing noise can be
alternatively ascribed to the meters, which results in the
restricted set of effects
\begin{equation}\label{eq:depolarizing-noise}
\tilde{\effect}_t = \Phi^*_t(\effect(\state)) = \left\lbrace t E
+(1-t)\frac{\tr{E}}{d}\id \, : \, E \in \effect(\hi)
\right\rbrace,
\end{equation}
where $\Phi^*_t$ is dual to $\Phi_t$. Clearly $\tilde{\effect}_t
\subseteq \effect(\hi)$ for all $t \in (0,1]$ and the equality
holds only if $t=1$. For $t \in (0,1]$ it is straigthforward to
verify that $\Phi^*_t$ is an affine isomorphism between
$\effect(\hi)$ and $\tilde{\effect}_t$ so that by Prop.
\ref{prop:restricting-isomorphism} we can deduce that $\reffect_t$
is a restriction of type (R1) that does not form a convex
subalgebra of $\effect(\state)$. For $t=0$ we have that
$\reffect_0 = \mathrm{span}_{[0,1]}\{\id\}$, which is a trivial
convex subalgebra of every effect algebra.

However, if we consider a class of general (shifted) depolarizing
channels $\Psi_{t,\xi}(\varrho) = t \varrho + (1-t) {\rm
tr}[\varrho] \xi$ with a general state $\xi$ instead of the
maximally mixed state $\id/d$, then a wider class of effects is
achievable. This describes a physically relevant situation when
the considered qubit is coupled to a two-level
fluctuator~\cite{PaGaFaAl14}. The dual map then reads
$\Psi^*_{t,\xi}(E) = t E+ (1-t) \tr{E \xi} \id$ so that in the
case of quantum theory it can be confirmed that $\{
\Psi^*_{t,\xi}(\effect(\hi))\, | \, \xi \in \sh\} =
\effect_{\rmeter_t}$, i.e., we get all the effects provided by Eq.
\eqref{eq:noisy-effects}. Clearly, $\effect_{\rmeter_t} \neq
\reffect_t$. The effect restrictions $\effect_{\rmeter_t}$ and
$\reffect_t$ are depicted in Fig. \ref{fig:qubit-restrictions}.
\end{example}

\section{Discussion and conclusions}\label{sec:conclusions}

Our primary goal in this paper was to establish a natural
criterion that any operational restriction is to satisfy and to
classify such restrictions. Given a set of meters one can always
randomly switch among the meters and classically postprocess their
measurement outcomes. As a result, one readily gets a simulation
closure of the original set of meters. Equipped with the natural
operational requirement of simulation closedness, we have divided
all operational restrictions into three classes.

Class (R1) describes such restrictions that originate from the
truncation of the set of effects. We have characterized such sets of effects in Theorem \ref{thm:R1}. We have demonstrated that a restriction to any
convex subalgebra of the set of all effects induces a proper
operational restriction of class (R1). Further in
Proposition~\ref{prop:restricting-isomorphism}
 we have proved that there exist operational restrictions of class
(R1) that do not reduce to convex effect subalgebras.
Proposition~\ref{prop:effect_restriction} clarifies that the
effect restriction is unique if the consistency conditions (E1)
and (E2) are satisfied.

Surprisingly enough, there exist operational restrictions of class
(R2) on meters such that every effect within the no-restriction
hypothesis is accessible, however, the set of meters is severely
truncated. The most prominent example is effectively dichotomic
meters, a simulation closure of dichotomic meters. Moreover, any
restriction of class (R2) must contain effectively dichotomic
meters as a subset
(Proposition~\ref{proposition-2-eff-subset-R2}).

It is worth mentioning that effectively dichotomic meters
naturally emerge in conventional experiments with polarized
photons and superconducting qubits, and therefore are of great
practical interest. Despite the fact that restrictions of class
(R2) seem quite innocent as compared to the restrictions of class
(R1), they do impose some strong physical limitations. In
Example~\ref{ex:ud} we have demonstrated that the success
probability of unambiguous discrimination of nonorthogonal pure
qubit states with effectively dichotomic meters is strictly less
than that with trichotomic meters. From a wider viewpoint or
resource theories~\cite{CoFrSp16,ChGo19}, Example~\ref{ex:ud}
opens an avenue for the study of the resource theory of
$n$-tomicity. Within such a resource theory, $n$-outcome meters
are free and any simulation scheme for meters is a free operation.
A meter that is not $n$-tomic may represent a resource for some
task (as a trichotomic meter in the unambiguous discrimination in
Example~\ref{ex:ud}). As a byproduct of this research direction,
we have also derived some sufficient and (separately) necessary
conditions for effectively $n$-tomic observables
(Propositions~\ref{prop:lambda1}--\ref{prop:not-2-eff}).

Finally, we have demonstrated that there are restrictions that
arise rather naturally but belong to neither (R1) nor (R2). We
have shown that such restrictions can emerge when one considers
meters compatible with a given meter
(Proposition~\ref{prop:compatibility}) or when one tries to
account for noise in the meters
(Proposition~\ref{prop:noisy-meters} and
Example~\ref{ex:depolarize}). We believe that the operational
restrictions of type (R3) can be further analyzed in subsequent
works.

\section{Acknowledgements}
S.N.F. and T.H. acknowledge the support of the Academy of Finland
for mobility grants to visit University of Turku and Moscow
Institute of Physics and Technology, respectively. T.H. and L.L.
acknowledge the support from the Academy of Finland via the Centre
of Excellence program (Grant No. 312058) as well as Grant No.
287750. L.L. acknowledges financial support from University of
Turku Graduate School (UTUGS).

\vspace*{1cm}
%%%%%%%%%%%%%%%%%%%%%%%%%%

\end{document}